\newcolumntype{R}[1]{>{\raggedleft\let\newline\\\arraybackslash\hspace{0pt}}p{#1}}
\newtheorem{definition}{Definition}[section]
\newtheorem{lemma}[definition]{Lemma}
\newtheorem{theorem}[definition]{Theorem}
\newtheorem{corollary}[definition]{Corollary}
\newtheorem{proposition}[definition]{Proposition}
\newtheorem{observation}[definition]{Observation}
\newcommand{\comp}[2]{\ensuremath{\widehat{#1[#2]}}}
\newcommand{\dom}[2]{\ensuremath{\mathsf{dom}_{#1}(#2)}}
\newcommand{\parent}[2]{\ensuremath{\mathsf{parent}_{#1}(#2)}}
\newcommand{\imm}[2]{\ensuremath{\mathsf{imm}_{#1}(#2)}}
\newcommand{\var}{\ensuremath{\text{var}}}
\newcommand{\vdeg}{\ensuremath{\text{deg}}}
\newcommand{\out}{\ensuremath{\text{in}}}
\newcommand{\arb}{\ensuremath{\mathsf{arb}}}
\newcommand{\lab}[1]{#1}
\newcommand{\bigo}{\mathcal{O}}
\newcommand{\eg}{e.g.\xspace}
\newcommand{\etal}{et~al.}
\newcommand{\aux}{\ensuremath{\mathsf{aux}}}
\newcommand{\rooot}{\ensuremath{\mathsf{root}}}
\newcommand{\variable}[1]{\ensuremath{\mathsf{#1}}}
\begin{document}

%\author{
%  Mat\'u\v{s} Mihal\'{a}k\inst{1} \and 
%  Przemys\l{}aw Uznański\inst{2} \and
%  Pencho Yordanov\inst{3,4}
%}

%\institute{
%  Department of Knowledge Engineering, Maastricht University, The Netherlands \and
%  Helsinki Institute for Information Technology (HIIT), %Department of Computer Science, 
%  Aalto University, Finland \and
%  Department of Biosystems Science and Engineering, ETH Zurich, Switzerland \and
%  Swiss Institute of Bioinformatics, Basel, Switzerland 
%}

\author[1]{Mat\'u\v{s} Mihal\'{a}k}
\author[2]{Przemys\l{}aw Uznański}
\author[3,4]{Pencho Yordanov}

\affil[1]{Department of Knowledge Engineering, Maastricht University, The Netherlands}
\affil[2]{Helsinki Institute for Information Technology (HIIT), Aalto University, Finland}
\affil[3]{Department of Biosystems Science and Engineering, ETH Zurich, Switzerland}
\affil[4]{Swiss Institute of Bioinformatics, Basel, Switzerland }

\title{Prime Factorization of the Kirchhoff Polynomial:\\ Compact Enumeration of Arborescences}

\maketitle

\begin{abstract}

  We study the problem of enumerating all rooted directed spanning trees
  (arborescences) of a directed graph (digraph) $G=(V,E)$ of $n$ vertices. 
  An arborescence $A$ consisting of edges $e_1,\ldots,e_{n-1}$ can be
  represented as a monomial $\lab{e_1}\cdot\lab{e_2}\cdots\lab{e_{n-1}}$
  in variables %$\lab{e}$, $e\in E$.
  $\lab{e} \in E$.
  All arborescences $\arb(G)$ of a digraph then define the 
  Kirchhoff polynomial $\sum_{A \in \arb(G)} \prod_{e\in A} \lab{e}$.
  We show how to compute a compact representation of the Kirchhoff
  polynomial -- its prime factorization, and how it relates to combinatorial properties of
  digraphs such as strong connectivity and vertex domination. In particular, we
  provide digraph decomposition rules that correspond to factorization steps of
  the polynomial, and also give necessary and sufficient primality conditions
  of the resulting factors expressed by connectivity properties of the
  corresponding decomposed components.
  Thereby, we obtain a linear time algorithm for decomposing a digraph
  into components corresponding to factors of the initial polynomial, and a
  guarantee that no finer factorization is possible.
  The decomposition serves as a starting point for a recursive deletion-contraction algorithm,
  and also as a preprocessing phase for iterative enumeration algorithms. Both
  approaches produce a compressed output and retain some structural properties in
  the resulting polynomial. This proves advantageous in practical applications
  such as calculating steady states on digraphs governed by Laplacian dynamics,
  or computing the greatest common divisor of Kirchhoff polynomials.
  Finally, we initiate the study of a class of digraphs which allow for a
  practical enumeration of arborescences. Using our decomposition rules
  we observe that various digraphs from real-world applications fall into this
  class or are structurally similar to it.

\end{abstract}
\section{Introduction and Related Work}
A spanning tree of an undirected graph $G$ is a connected acyclic subgraph
containing all vertices of $G$. In a directed graph (digraph for short)
$G$, the analogue is an \emph{arborescence}, i.e., a subdigraph of $G$
spanning its vertex set such that all vertices are reachable from a root
vertex along a unique directed path.

We are interested in the problem of enumerating all arborescences of a digraph $G$. 
The famous Tutte's Matrix-Tree Theorem~\cite{tutte1948dissection} states that
the number of arborescences rooted at vertex $j$ in a digraph is equal,
up to a sign, to the $(jj)$th minor of the Kirchhoff matrix; the Kirchhoff
matrix, which is closely related to the Laplacian matrix, is the difference
between the diagonal matrix having the in-degree of the vertices on its diagonal
and the adjacency matrix of the digraph. (This theorem is itself a
generalization of the equally famous Kirchhoff's Theorem~\cite{Kirchhoff/1847}
for undirected graphs.)
Thus, \emph{counting} the number of arborescences can be achieved in
polynomial time (e.g., by the Gaussian elimination).
Arborescences can also be \emph{enumerated} with the Matrix-Tree
Theorem by uniquely labeling the edges of the digraph. Then the modified
Kirchhoff matrix is the difference between the diagonal matrix having the sum of
labels of all in-coming edges to each vertex on its diagonal and the adjacency
matrix constructed from the labels of edges connecting adjacent vertices.
Summing up all $(jj)$th minors results in a homogeneous polynomial, called the
Kirchhoff polynomial, in which each monomial represents an arborescence
consisting of the edges corresponding to the labels appearing in the monomial.
Formally, the Kirchhoff polynomial $\kappa(G)$ of a digraph $G$ over variables
$\lab{e} \in E$, is defined as
\begin{displaymath}
  \kappa(G) = \sum_{A \in \arb(G)} \prod_{e\in A} \lab{e}\text{,}
  \vspace{-1mm}
\end{displaymath}
where $\arb(G)$ denotes the set of all arborescences of $G$.
In general, there might be exponentially many arborescences for a given $G$, and their
enumeration by explicitly computing the canonical form of $\kappa(G)$
cannot be done in polynomial time.

Gabow and Myers presented an algorithm for enumerating all
arborescences with $O(|E| + N\cdot|E|)$ running time ($N$ is the number of
arborescences) and $O(|E|)$ space
requirements~\cite{gabow1978finding}.
Later, this algorithm has been improved by the currently two state-of-the-art
algorithms~\cite{unoa96,kk00}. Both algorithms start by
computing an initial (arbitrary) arborescence, and then, iteratively,
compute ``close-by'' arborescences by outputting only the edge-difference to the
previously computed and listed arborescence. Uno \cite{unoa96} uses a
reverse search, while Kapoor and Ramesh \cite{kk00} use depth-first search in
the space of all arborescences (represented as an undirected graph
where a node corresponds to an arborescence, and an edge denotes a
single edge-swap between the two adjacent arborescences). The algorithm of Uno
runs in $O(|E| + N\cdot \sqrt{|V|}\log(|E|/|V|))$ time and has space complexity
$O(|E|)$, and the algorithm of Kapoor and Ramesh runs in $O(N|V|+|V|^3)$ time
and has a space complexity $O(|V|^2)$.

In this paper we propose an alternative approach to the above algorithms,
focusing on the form of the output, which in our case is predominantly more
compressed, and allows for certain symbolic manipulations. 
Our enumeration algorithm is based on a (recursive) factorization of the
Kirchhoff polynomial $\kappa(G)$.
We present two decomposition rules for digraphs, and show that \emph{every}
factor of $\kappa(G)$ corresponds to a digraph obtained by applying one of the decomposition rules and is derived from some subdigraph of $G$. The first decomposition rule
corresponds to finding a strongly connected component (SCC) of $G$ together with
all of its in-coming edges $e_{in}$, and subsequently contracting all source
vertices of $e_{in}$ into a single new vertex.
The second decomposition rule is more involved and is based on \emph{dominance}
and \emph{immediate dominance} of vertices (these terms are explained in
Section~\ref{sec:preliminaries}).
Our decomposition is closest to the decomposition that has been formulated in the language of principal partition of matroids~\cite{nma80}.

To the best of our knowledge, the question of relating combinatorial features of a digraph to the algebraic properties of its Kirchhoff polynomial (such as its prime factorisation), although being a fundamental one, has not previously been investigated.

We illustrate the compressing potential of our method on an example in 
Figure~\ref{fig:compression_example}.
\begin{figure}[t]
%\label{fig:compression_example}
  \centering
    \includegraphics[scale=1.5]{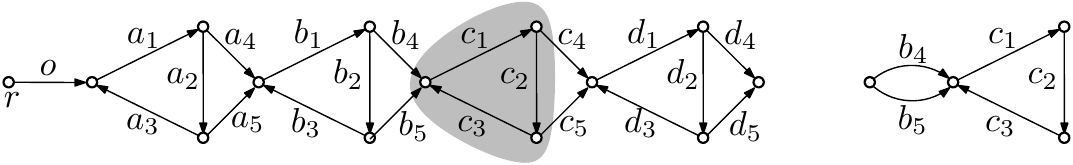}
    \caption{Left: A digraph with one trivial (the initial vertex) and four
      non-trivial strongly connected components (here cycles), each induced by
      the edges $\{x_1,x_2,x_3\}$, $x\in \{a,b,c,d\}$. The gray area depicts one
      such strongly connected component $G[V']$. Right: An illustration of
      $\comp{G}{V'}$.}
    \label{fig:compression_example}
\end{figure}
Obviously, every arborescence of the example is rooted in $r$, contains
the edge with label $o$, and contains the edges of type $x_1$ and $x_2$ from
every SCC. The only freedom left is choosing whether edge $x_4$ or edge $x_5$ is
part of an arborescence.
These choices are mutually independent, and as a result, there are $2^4$
arborescences, which results in $208$ explicitly written variables. The
factorized representation of the Kirchhoff polynomial (based on the SCC
decomposition) is much shorter:
$$o(a_1a_2a_4+a_1a_2a_5)(b_1b_2b_4+b_1b_2b_5)(c_1c_2c_4+c_1c_2c_5)(d_1d_2d_4+d_1d_2d_5),$$
which contains 25 explicitly written variables. We can further decompose
each factor to finally obtain the prime factorization
$$oa_1a_2(a_4+a_5)b_1b_2(b_4+b_5)c_1c_2(c_4+c_5)d_1d_2(d_4+d_5),$$
containing only 17 explicitly written variables.

An exhaustive application of the decomposition rules results in non-decomposable
digraphs derived from subdigraphs of $G$, which correspond to prime factors of $\kappa(G)$. The prime decomposition provides a compressed form of
$\kappa(G)$ which can be easily handled (evaluated) and manipulated
(e.g., finding the greatest common divisor of two Kirchhoff polynomials).
Moreover, the prime factorization of $\kappa(G)$ retains important
connectivity properties of the original digraph and provides information about
the type of digraphs with practically enumerable arborescences, which is
beneficial for various practical applications.
Furthermore, the decomposition/prime factorization can serve as a preprocessing
step for the existing iterative enumeration algorithms (such as those by Gabow
and Myers~\cite{gabow1978finding}, Uno~\cite{unoa96}, or Kapoor and
Ramesh~\cite{kk00}), but can also be a starting point for a recursive algorithm,
which employs the deletion-contraction step whenever further factorization
is impossible.
Finding the best edges (i.e., the edges that lead to the largest possible
compression) on which the recursion shall proceed is an interesting open problem. 
For our experimental evaluation, we resort to several simple heuristics that
can be of practical use, and that served well for demonstrating
the potential of our method.

The rest of the paper is organized as follows. Section~\ref{sec:preliminaries}
gives necessary terminology, definitions, and initial observations.
Section~\ref{sec:primality} characterizes the non-decomposability, in terms of the structural properties
of the input digraph, corresponding to primality of Kirchhoff polynomials.
Section~\ref{sec:decomposition} lays out the prime decomposition rules.
Section~\ref{sec:applications} discusses the applicability of the prime
decomposition and compressed enumeration, while demonstrating the potential of
our method by applying the recursive algorithm to real-world examples.

\section{Preliminaries}
\label{sec:preliminaries}

We are dealing with simple directed graphs (\emph{digraphs} for short), however all our
results also hold for digraphs with parallel edges (loops make no sense in our
applications, since no arborescence can ever contain one).
When necessary, we denote the vertices and the edges of a digraph $G$ by
$V(G)$ and $E(G)$, respectively.
For $V'\subseteq V$, $G[V']$ denotes the \emph{induced subdigraph} of $G$ by
the set of vertices $V'$.
We denote the set of all incoming edges to vertex $v$ by $\out_G(v)$ and
when it is not ambiguous, simply by $\out(v)$.

A digraph $G$ is strongly connected if for any two vertices $u,v \in V$ there is a
directed path from $u$ to $v$ and from $v$ to $u$.
A strongly connected component (SCC for short) of $G$ is any largest (w.r.t.
vertex inclusion) strongly connected induced subdigraph of $G$. 
It follows that no two distinct SCCs can share a
vertex, and, therefore, all strongly connected components $G_1,\ldots,G_k$ of a
digraph $G$ induce a unique partition $V(G_1),\ldots,V(G_k)$ of $V$.

Observe that for two distinct SCCs $G_i$ and $G_j$ there can be a directed
path from $G_i$ to $G_j$, or a path from $G_j$ to $G_i$, but not both. The
existence of such paths between SCCs naturally induces a unique partial order on
the SCCs $G_1,\ldots,G_k$.

Recall that an arborescence $A$ is a rooted directed spanning tree of $G$ with
edges directed away from the root. Observe that in an arborescence $A$,
the root is the unique vertex that can reach every other vertex along a directed path
in $A$.
We denote by $\arb(G)$ the set of all arborescences of $G$, and by
$\arb_v(G)$ the set of all arborescences rooted at vertex $v$.
Let $\rooot_v(G)$ be the digraph constructed from $G$ by removing all edges
incoming to $v$. All arborescences of $\rooot_v(G)$ are then
necessarily rooted at $v$. We say that $G$ is \emph{rooted} at vertex $v$,
if $v$ has no incoming edges, and every other vertex is reachable from $v$.
Observe that every arborescence of a digraph rooted at a certain vertex $v$ is itself also a digraph rooted at vertex $v$, so we find it justified to use the same term ``rooted'' when referring to digraphs and to arborescences.

We represent the set of all arborescences in a digraph $G$ with a
homogeneous multivariate polynomial over the variables $\lab{e}$, $e\in E$,
called the \emph{Kirchhoff polynomial} (or synonymously, the
arborescence \emph{enumerator}) $\kappa(G)$. For such a multivariate polynomial
$P$, we denote by $\text{var}(P)$ the set of variables appearing in $P$ (in
monomials with non-zero coefficients).

Besides the Kirchhoff polynomial $\kappa(G) = \sum_{A \in \arb(G)} \prod_{e\in
E(A)} \lab{e}$, we sometimes are only interested in arborescences rooted in
a specific vertex $v$. Then we consider the related polynomial $\kappa_v(G)
= \sum_{A \in \arb_v(G)} \prod_{e   \in E(A)} \lab{e}$.\footnote{Please, note that for a singleton $G$ (that is $V(G) = \{v\}$) the \emph{empty} digraph is its only spanning arborescence, thus formally $\kappa(G) = 1$.}

A polynomial $P$ is a \emph{factor} of a polynomial $Q$, if there exists a polynomial
$R$ such that $Q = P\cdot R$. A polynomial $P$ that has only trivial factors is
called \emph{prime}. Similarly, we say that \emph{$G'$ is a component (a prime
component) of $G$} if \emph{$\kappa(G')$ is a factor (a prime factor) of
$\kappa(G)$}.

In general, results presented in this paper hold even when considering
multidigraphs instead of simple directed graphs.
Through our digraph manipulations we could sometimes obtain multidigraphs
(with loops) as a result. However, this does not present a problem, as we can
transform the multidigraph into a simple digraph whose Kirchhoff polynomial
is equal to that of the multidigraph:
\begin{observation}
  The following two operations on a multidigraph $G$ preserve $\kappa(G)$:
\begin{enumerate}
\item Removal of all loops,
\item Replacing multiple parallel directed edges $e_1,e_2,\ldots,e_i$ going from
  $u$ to $v$ with a single edge from $u$ to $v$ such that $\lab{e} =
  \lab{e_1}+\lab{e_2}+\ldots+\lab{e_i}$.
\end{enumerate}
\end{observation}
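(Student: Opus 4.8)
The plan is to prove each of the two operations separately, and in both cases the argument rests on the defining structural property of an arborescence: it is acyclic and every non-root vertex has exactly one incoming edge. Concretely, for each operation I would exhibit a correspondence between the arborescences of the original multidigraph $G$ and those of the transformed multidigraph $G'$ that preserves the total contribution to the Kirchhoff polynomial, so that $\kappa(G)=\kappa(G')$ follows by summing monomials.

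For the first operation I would argue that $\arb(G)$ and $\arb(G')$ are literally the same set. Since an arborescence is a spanning tree and hence acyclic, it cannot contain a loop (a cycle of length one). Thus deleting all loops removes no edge that any arborescence uses: every $A\in\arb(G)$ is already a subdigraph of $G'$ and conversely, and the monomials $\prod_{e\in E(A)}\lab{e}$ are unchanged. Hence $\kappa(G)=\kappa(G')$ immediately.

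The second operation is the substantive one. Let $G'$ be obtained by replacing the parallel edges $e_1,\dots,e_i$ from $u$ to $v$ (we may assume $u\neq v$, since parallel loops are already eliminated by the first operation) with a single edge $e$ such that $\lab{e}=\lab{e_1}+\dots+\lab{e_i}$. I would define a map $\phi:\arb(G)\to\arb(G')$ that replaces the unique parallel edge used by $A$, if any, with $e$ and leaves every other edge untouched. The key structural fact, which makes $\phi$ well defined, is that an arborescence uses at most one edge from $\out(v)$, so it uses at most one of $e_1,\dots,e_i$; moreover, since $e$ has the same endpoints, swapping $e_j$ for $e$ preserves both acyclicity and reachability from the root, so $\phi(A)$ is indeed an arborescence of $G'$. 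Then I would analyze the fibers of $\phi$: for $B\in\arb(G')$ not using $e$, the fiber is the single arborescence $B$ itself with the same monomial; for $B$ using $e$, the fiber is exactly $\{\,B-e+e_j : 1\le j\le i\,\}$, and the sum of their monomials factors as $\big(\prod_{f\in E(B)\setminus\{e\}}\lab{f}\big)\cdot(\lab{e_1}+\dots+\lab{e_i})=\prod_{f\in E(B)}\lab{f}$, i.e.\ the monomial of $B$. Summing $\sum_{A\in\phi^{-1}(B)}\prod_{f\in E(A)}\lab{f}=\prod_{f\in E(B)}\lab{f}$ over all $B\in\arb(G')$ then yields $\kappa(G)=\kappa(G')$.

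The main obstacle I anticipate is not any single calculation but the careful bookkeeping needed to justify the fiber description: I must confirm that every arborescence of $G$ genuinely uses at most one of the parallel edges (which follows from the at-most-one-incoming-edge property, but should be stated explicitly, including the case where $v$ is the root and uses none), and that replacing $e_j$ by the parallel $e$, or vice versa, neither creates a cycle nor disconnects any vertex from the root, so that $\phi$ and the fibers really land in the respective arborescence sets. Once this correspondence is pinned down, the distributivity step that collapses the sum of $i$ monomials into a single monomial carrying the combined label $\lab{e}$ is routine.
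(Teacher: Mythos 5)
Your proof is correct; the paper states this as an Observation without proof, and your argument (loops are excluded from any arborescence by acyclicity, and the fibers of the edge-merging map contribute $\bigl(\prod_{f\in E(B)\setminus\{e\}}\lab{f}\bigr)\cdot(\lab{e_1}+\cdots+\lab{e_i})$ per arborescence $B$ of the reduced digraph) is exactly the intended justification. The bookkeeping you flag — at most one parallel edge per arborescence, and endpoint-preserving swaps keeping acyclicity and reachability — is handled correctly, so nothing is missing.
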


\begin{observation}
  An arborescence of $G$ exists iff the partial order of the
  SCCs has exactly one minimal element. Such a SCC is called \emph{the initial
  SCC}.
\end{observation}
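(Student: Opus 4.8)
The plan is to prove both directions of the biconditional separately, using the partial order structure on SCCs established in the preliminaries.

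First I would argue the forward direction contrapositively: if the partial order on the SCCs $G_1,\ldots,G_k$ has two (or more) distinct minimal elements, then no arborescence exists. The key observation is that a minimal element $G_i$ in this partial order is an SCC with no incoming edges from any other SCC; that is, every edge entering a vertex of $G_i$ originates within $G_i$. Consequently no vertex outside $G_i$ can reach any vertex of $G_i$ along a directed path. Now suppose two distinct SCCs $G_i$ and $G_j$ are both minimal. In any candidate arborescence rooted at some vertex $r$, every vertex must be reachable from $r$ along a directed path. But $r$ lies in at most one SCC; say without loss of generality $r \notin V(G_i)$ (this must hold for at least one of $G_i, G_j$ since $r$ belongs to a single SCC). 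Then $r$ cannot reach the vertices of $G_i$, contradicting the defining property of an arborescence. Hence at most one minimal element exists whenever an arborescence does.

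For the converse, suppose the partial order has exactly one minimal element, the initial SCC $G_1$. I would construct an arborescence explicitly, thereby proving existence. Pick any vertex $r \in V(G_1)$ as the intended root. The claim is that every vertex of $G$ is reachable from $r$: since $G_1$ is the unique minimal element, every other SCC $G_\ell$ dominates $G_1$ in the partial order, meaning there is a directed path from $G_1$ to $G_\ell$; combined with strong connectivity inside $G_1$ (so every vertex of $G_1$ reaches $r$'s companions, and $r$ reaches every vertex of $G_1$) and inside each $G_\ell$, one concludes $r$ reaches all of $V$. Once reachability from $r$ is established, a spanning arborescence rooted at $r$ can be extracted by a breadth-first or depth-first search from $r$, keeping for each reached vertex $v \neq r$ exactly one incoming tree edge. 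This yields a subdigraph spanning $V$ in which $r$ is the unique root and every vertex has a unique directed path from $r$, i.e., an arborescence.

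The main obstacle, and the step requiring the most care, is the reachability argument in the converse: one must justify cleanly that ``unique minimal element in the SCC partial order'' forces reachability from an interior vertex of the initial SCC to \emph{every} vertex of $G$. The subtlety is that the partial order only directly guarantees paths between \emph{comparable} SCCs, so I would argue by a straightforward induction on the topological order of the condensation (the DAG of SCCs): each non-initial SCC $G_\ell$ has at least one incoming inter-SCC edge from a strictly earlier SCC (otherwise $G_\ell$ would itself be minimal, contradicting uniqueness), and by the induction hypothesis that earlier SCC is already reachable from $r$; strong connectivity within $G_\ell$ then propagates reachability to all of its vertices. The remaining construction of the arborescence from a reachability certificate is entirely routine.
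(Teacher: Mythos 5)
Your proof is correct: the paper states this as an unproved observation, and your argument (two minimal SCCs cannot both be reached from a single root; a unique minimal element yields reachability of every SCC by induction on the topological order of the condensation, after which a DFS/BFS tree gives the arborescence) is exactly the standard justification the authors intend. No gaps; the only point worth noting is that ``exactly one'' also uses the fact that a finite nonempty poset always has at least one minimal element, which your contrapositive argument implicitly covers.
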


\begin{definition}
  Let $G[V']$ be a SCC of $G$. By $\comp{G}{V'}$ we
  denote the digraph created from $G[V']$ as follows (cf.\,
  Figure~\ref{fig:compression_example}):
  \begin{itemize}
    \item If $G[V']$ is the initial SCC, then
      $\comp{G}{V'} = G[V']$.
    \item Otherwise, we create a new vertex $v_{\aux}$, and for every edge $vu$,
      such that $u \in V'$ and $v \not\in V'$, we add an edge $v_{\aux}u$ with
      label $\lab{v_{\aux}u} = \lab{vu}$.
  \end{itemize}
\end{definition}

\begin{definition}[Domination]
  If $G$ is rooted at $v$, then we say that vertex $u$ dominates vertex $w$, if
  all directed paths from $v$ to $w$ go through vertex $u$. 
  By $\dom{G}{u}$ we denote the set of all vertices of $G$ dominated by $u$. 
  If $\dom{G}{u} \not= \{u\}$ and $u\not= v$, we say that $u$ is a
  non-trivial dominator.
\end{definition}

\begin{definition}[Immediate Domination]
  Let $G$ be rooted at $v$. Vertex $y$ is called an immediate dominator of
  vertex $z$, if $y \not= z$, $y$ dominates $z$, and for every other
  vertex $x$ that dominates $z$, we have that $x$ also dominates $y$. 
    Equivalently we say that $z$ is immediately dominated by $y$, and we denote such $y$ as $\parent{G}{z}$ (it is easy to see that there can be at most one such vertex).
  The set of all immediately dominated vertices by vertex $y$ is denoted by 
  $\imm{G}{y} = \{ z : \parent{G}{z} = y \}$.  
\end{definition}
Thus, $\parent{G}{z}$ taken over all $z \in V$ defines a directed tree $T(G)$ of
immediate domination, rooted at the root $v$ of $G$ (note that $\parent{G}{v}$
is undefined).
Furthermore, for any vertex $y$, the immediately dominated vertices $z\in
\imm{G}{y}$ induce the following partition of $\dom{G}{y}\setminus\{y\}$:
$\{\dom{G}{z}\,:\, z\in \imm{G}{y} \}$.

\begin{definition}[Contraction]
  For a subset $S$ of the vertices of a digraph $G$, and for a vertex $u \in S$,
  by contracted digraph $G(S \to u)$ we denote the digraph $G'$ constructed from $G$ as follows:
  \begin{enumerate}
    \item All edges $xy$, where $x \in V\setminus S$, $y \in S\setminus\{u\}$ are
      removed from $G$.
    \item All edges within $S$ are removed (i.e., all edges $xy$, where $x,y\in
      S$).
    \item All vertices of $S$ are contracted into a single vertex $u$.
  \end{enumerate}
\end{definition}

Observe that $G'$ has no loops, 
and if $G$ has no parallel edges incoming to $u$, then $G'$, as well, has no parallel edges incoming to $u$. It can, however, happen that $u$
has parallel outgoing edges.
Further, for all $z \in V \setminus S$:
\begin{itemize} 
  \item $zu \in E(G')$ if and only if $zu \in E(G)$, and
  \item for every edge $sz \in E(G)$ such that $s \in S$, there is a corresponding edge $uz \in E(G')$.
\end{itemize}

\section{Primality of Components}
\label{sec:primality}

Observe that $\kappa(G)$ is a special homogeneous polynomial: every monomial of
$\kappa(G)$ contains exactly $n-1$ variables, each with exponent equal to one.
Obviously, this property needs to also hold for factors of $\kappa(G)$ (recall
that the monomials of $\kappa(G)$ represent an arborescence of $G$).
Furthermore, observe that no variable $\lab{e}$ can appear in two factors of
$\kappa(G)$.

\begin{proposition}
  \label{prop:all_monomials_same_degree}
  If $P$ is a factor of $\kappa(G)$, then all monomials of $P$ have the same
  number of variables, each with exponent equal to one.
\end{proposition}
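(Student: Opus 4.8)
The plan is to treat this as a purely algebraic statement about factors of a polynomial that is simultaneously \emph{homogeneous} of degree $n-1$ and \emph{multilinear} (every variable occurs with exponent at most one in every monomial). I would work in the polynomial ring over the coefficient field; since the coefficients of $\kappa(G)$ are all $0$ or $1$, we may take this ring to be an integral domain such as $\mathbb{Q}[\,\lab{e} : e\in E\,]$. Writing $\kappa(G) = P\cdot R$, the goal is to show that both homogeneity and multilinearity are inherited by $P$, since together they yield exactly the claimed structure.

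First I would establish homogeneity of $P$. Decompose $P = P_{a} + \cdots + P_{b}$ and $R = R_{c} + \cdots + R_{d}$ into their nonzero homogeneous parts indexed by total degree, with $a\le b$ and $c\le d$ the extreme degrees. In the product $P\cdot R$, the homogeneous part of smallest degree is $P_{a}R_{c}$ and the one of largest degree is $P_{b}R_{d}$; since the ring is an integral domain these products are nonzero and cannot cancel against anything else. As $\kappa(G)$ is homogeneous of degree $n-1$, we must have $a+c = b+d = n-1$, forcing $a=b$ and $c=d$. Hence $P$ (and likewise $R$) is homogeneous, so all monomials of $P$ share a common total degree $k$.

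Next I would establish that $P$ is multilinear. Fix a variable $x=\lab{e}$ and regard every polynomial as a polynomial in $x$ with coefficients in the remaining variables. Writing $\deg_x$ for the degree in $x$, the leading coefficients in $x$ multiply without cancelling (again using that the ring is a domain), so $\deg_x(P\cdot R) = \deg_x P + \deg_x R$. Because $\kappa(G)$ is multilinear we have $\deg_x \kappa(G)\le 1$, whence $\deg_x P \le 1$ for every variable $x$. Thus no variable appears to a power exceeding one in any monomial of $P$. Combining this with the homogeneity from the previous step, each monomial of $P$ has total degree $k$ and uses each of its variables with exponent exactly one, so it consists of precisely $k$ distinct variables. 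This is exactly the assertion.

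The only genuinely delicate point is the ``no cancellation'' reasoning invoked twice above, namely that the lowest and highest homogeneous parts, and the leading $x$-coefficients, of a product are themselves products and hence nonzero. This is precisely where the integral-domain property of the polynomial ring enters; once that is granted, both inheritances are immediate, and no combinatorial input about arborescences beyond the two noted properties of $\kappa(G)$ is needed.
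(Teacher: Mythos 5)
Your proof is correct: the extreme-degree-component argument for homogeneity and the $\deg_x$-additivity argument for multilinearity (both relying on the polynomial ring being an integral domain) are exactly the right tools, and together they give the claim. The paper itself offers no proof of this proposition, treating it as an immediate observation in the preceding paragraph, so your write-up is simply the standard rigorous justification of what the authors state as obvious.
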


\begin{proposition}
  \label{lem:edg_part}
  If $\kappa(G) = P \cdot Q$, then $\var(P) \cap \var(Q) = \emptyset\text{.}$
\end{proposition}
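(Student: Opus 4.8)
The plan is to exploit the defining structural feature of $\kappa(G)$ noted just above Proposition~\ref{prop:all_monomials_same_degree}: it is \emph{multilinear} (square-free), meaning every variable $\lab{e}$ occurs with exponent at most one in each monomial. Equivalently, for any single variable $x=\lab{e}$, the degree of $\kappa(G)$ in $x$ is at most $1$. I would argue by contradiction, supposing there is a variable $x\in\var(P)\cap\var(Q)$, and show that the product $P\cdot Q$ must then have $x$-degree at least $2$, contradicting multilinearity.

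The main tool is to regard everything as univariate polynomials in the single variable $x$ with coefficients in the ring $R$ of polynomials in the remaining variables. First I would write $P=\sum_{i=0}^{d_P}P_i\,x^i$ and $Q=\sum_{j=0}^{d_Q}Q_j\,x^j$ with $P_i,Q_j\in R$ and leading coefficients $P_{d_P}\neq 0$, $Q_{d_Q}\neq 0$, where $d_P,d_Q$ denote the respective degrees of $x$. The assumption $x\in\var(P)$ forces $d_P\ge 1$, and $x\in\var(Q)$ forces $d_Q\ge 1$.

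The key step is that $R$ is an integral domain (it is a polynomial ring over $\mathbb{Z}$, or over a field), and hence so is $R[x]$. Therefore the leading coefficient of the product $P\cdot Q$ in $x$ is exactly $P_{d_P}\cdot Q_{d_Q}$, which is nonzero because the product of two nonzero elements of an integral domain is nonzero. Consequently the $x$-degree of $\kappa(G)=P\cdot Q$ equals $d_P+d_Q\ge 2$. This contradicts the multilinearity of $\kappa(G)$, whose $x$-degree is at most $1$; hence no variable can lie in both $\var(P)$ and $\var(Q)$, i.e., $\var(P)\cap\var(Q)=\emptyset$.

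I expect the only subtle point to be the observation that there is \emph{no cancellation} of the top-degree term in $x$: this is precisely where the integral-domain property is used, ruling out the possibility that $P_{d_P}Q_{d_Q}$ vanishes. Everything else is a routine bookkeeping of degrees. It is worth noting that the argument does not require the positivity of the coefficients of $\kappa(G)$ — only that $\kappa(G)$ is square-free — so it applies verbatim to any multilinear polynomial.
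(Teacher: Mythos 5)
Your proof is correct and takes the same route the paper intends: the paper states this proposition without an explicit proof, as an immediate consequence of the multilinearity of $\kappa(G)$ observed just before Proposition~\ref{prop:all_monomials_same_degree} (``no variable $\lab{e}$ can appear in two factors''). Your degree-counting argument over the integral domain $R[x]$ supplies precisely the non-cancellation detail needed to make that observation rigorous, including for factors with coefficients of mixed sign.
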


We observe that the partitioning of edges into $P$ or $Q$ under factorization
$\kappa(G) = P\cdot Q$ is induced by a partitioning of vertices.

\begin{lemma}
  \label{lem:ver_part}
  If $\kappa(G) = P \cdot Q$ and $v \in V$, then either $\out(v) \subseteq \var(P)$ or $\out(v) \subseteq \var(Q)$.
\end{lemma}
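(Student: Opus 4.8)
The plan is to exploit the fact that a factorization $\kappa(G) = P \cdot Q$ forces the arborescences of $G$ to have a Cartesian-product structure, and then to derive a contradiction from the defining property of arborescences that every vertex receives at most one incoming edge. The vertex $v$ being a source (so $\out(v) = \emptyset$) is automatically fine, since the empty set is contained in both factors, so I may assume $v$ has incoming edges.

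First I would record two structural facts. By Proposition~\ref{lem:edg_part} the variable sets are disjoint, $\var(P) \cap \var(Q) = \emptyset$, and by construction $\kappa(G)$ is multilinear with every nonzero coefficient equal to $1$, its monomials being in bijection with the arborescences of $G$ (the monomial $\prod_{e\in E(A)}\lab{e}$ encodes exactly the edge set of $A$). Second --- the technical heart --- I would prove the following product property: if $m_P$ is any monomial occurring in $P$ and $m_Q$ any monomial occurring in $Q$, then $m_P\cdot m_Q$ occurs with nonzero coefficient in $\kappa(G)$. This holds because disjointness of $\var(P)$ and $\var(Q)$ makes the multiplication map $(m_P,m_Q)\mapsto m_P m_Q$ injective (one recovers $m_P$ and $m_Q$ by restricting the resulting monomial to $\var(P)$ and to $\var(Q)$, respectively), so no two terms of the expansion of $P\cdot Q$ can collide; hence no cancellation can occur, and the coefficient of $m_P m_Q$ in $\kappa(G)$ is exactly the product of the (nonzero) coefficients of $m_P$ in $P$ and of $m_Q$ in $Q$.

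With these in hand I would argue by contradiction. Suppose $\out(v)$ is split, i.e., there are two distinct incoming edges $e_1,e_2\in\out(v)$ with $\lab{e_1}\in\var(P)$ and $\lab{e_2}\in\var(Q)$. By definition of $\var(\cdot)$ there is a monomial $m_P$ of $P$ containing $\lab{e_1}$ and a monomial $m_Q$ of $Q$ containing $\lab{e_2}$. By the product property, $m_P m_Q$ is a nonzero monomial of $\kappa(G)$, hence equals $\prod_{e\in E(A)}\lab{e}$ for some arborescence $A$, and its variable support --- which contains both $\lab{e_1}$ and $\lab{e_2}$ --- is exactly $E(A)$. Thus $A$ contains two distinct edges $e_1,e_2$ both entering $v$, contradicting the fact that in an arborescence every vertex has in-degree at most one. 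Therefore no such split exists: all incoming edges of $v$ that appear in $\kappa(G)$ lie in a single factor. Since deleting edges that belong to no arborescence leaves $\kappa(G)$ unchanged, we may assume every edge of $G$ occurs in $\kappa(G)$, which yields $\out(v)\subseteq\var(P)$ or $\out(v)\subseteq\var(Q)$.

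I expect the main obstacle to be the product property, and specifically justifying the absence of cancellation: it is tempting to invoke positivity of $\kappa(G)$ directly, but the clean argument is that disjoint variable supports make the term map injective, so cancellation is impossible irrespective of signs. The only other point requiring care is the boundary remark about edges lying in no arborescence, which I would dispatch by the stated reduction that removing such edges preserves $\kappa(G)$.
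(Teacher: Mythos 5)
Your proof is correct and follows essentially the same route as the paper's: assume $v$ has incoming edges split between $\var(P)$ and $\var(Q)$, produce a monomial of $\kappa(G)$ containing both, and contradict the in-degree-one property of arborescences. The paper simply asserts that such a monomial of $\kappa(G)$ exists, whereas you explicitly justify it via the no-cancellation/injectivity argument from disjoint variable supports --- a detail worth having, but not a different approach.
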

\begin{proof}
Assume that there are two incoming edges to $v$, $e_1$ and $e_2$, such that $e_1
\in \var(P)$ and $e_2 \in \var(Q)$. Then there exists a monomial in $\kappa(G)$
(in its canonical form) containing both $e_1$ and $e_2$. But such a monomial
cannot represent an arborescence, a contradiction. %\qed
\end{proof}

\begin{theorem}
  \label{th:nonfact1}
  Let $G$ be a strongly connected digraph. Then $\kappa(G)$ is prime.
\end{theorem}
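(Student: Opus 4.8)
The plan is to argue by contradiction: assume $\kappa(G) = P \cdot Q$ with both $P$ and $Q$ non-constant (for $n \ge 2$; the single-vertex case is trivial since then $\kappa(G)=1$), and derive a contradiction from strong connectivity through a degree count that varies the root of the arborescences.

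First I would use Lemma~\ref{lem:ver_part} to promote the factorization to a partition of the vertex set. Since $G$ is strongly connected with $n \ge 2$, every vertex has at least one incoming edge, and by Lemma~\ref{lem:ver_part} the whole set $\out(v)$ lies either in $\var(P)$ or entirely in $\var(Q)$, these being disjoint by Proposition~\ref{lem:edg_part}. This assigns each vertex unambiguously to one of two classes, $V_P = \{v : \out(v)\subseteq\var(P)\}$ and $V_Q = \{v : \out(v)\subseteq\var(Q)\}$, so that $V = V_P \sqcup V_Q$.

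Next I would examine how an arborescence's monomial splits across this partition. An arborescence $A$ rooted at $r$ contributes exactly one variable per non-root vertex, namely the label of that vertex's unique incoming edge in $A$; this variable lies in $\var(P)$ precisely when the vertex lies in $V_P$. Hence the number of $\var(P)$-variables in the monomial of $A$ equals $|V_P \setminus \{r\}|$. Because $\var(P)\cap\var(Q)=\emptyset$, the $\var(P)$-part of this monomial is a genuine monomial of $P$, so by Proposition~\ref{prop:all_monomials_same_degree} its size equals $\deg P$ regardless of which $A$ we chose. The contradiction now comes from moving the root: if both $V_P$ and $V_Q$ were non-empty, strong connectivity would furnish arborescences rooted at some $r_P \in V_P$ and some $r_Q \in V_Q$; the former has $|V_P|-1$ variables from $\var(P)$ while the latter has $|V_P|$, yet both must equal $\deg P$, which is impossible. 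Therefore one class, say $V_Q$, is empty, forcing $\var(Q)=\emptyset$ (every edge is incoming to some vertex, necessarily in $V_P$), so $Q$ is constant, contradicting non-triviality.

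I expect the only delicate points to be the two facts that drive the root-variation step: that every vertex of a strongly connected digraph admits an arborescence rooted there (guaranteeing both $r_P$ and $r_Q$ exist), and that the monomial of each arborescence splits cleanly into its $\var(P)$- and $\var(Q)$-parts of the stated sizes. Both follow readily, the first from strong connectivity (every vertex reaches all others, so $\rooot_r(G)$ has a spanning arborescence) and the second from the disjointness in Proposition~\ref{lem:edg_part}. Thus the genuine content is setting up the vertex partition correctly, after which the argument is just bookkeeping on degrees.
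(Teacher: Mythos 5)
Your proposal is correct and follows essentially the same route as the paper's own proof: partition the vertices via Lemma~\ref{lem:ver_part} and Proposition~\ref{lem:edg_part}, then use strong connectivity to root arborescences in each class and derive a degree mismatch ($|V_P|-1$ versus $|V_P|$) contradicting Proposition~\ref{prop:all_monomials_same_degree}. No gaps; the argument matches the paper's in both structure and the key lemmas invoked.
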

\begin{proof}
Assume, on the contrary, that $\kappa(G) = P \cdot Q$, and $P$ and $Q$ are
nontrivial factors. Let $V_1$ and $V_2$ be the set of vertices with incoming
edges in $\var(P)$ and $\var(Q)$, respectively. By
Proposition~\ref{lem:edg_part} and Lemma~\ref{lem:ver_part}, $V_1 \cap V_2 =
\emptyset$, and since $P$ and $Q$ are nontrivial, $V_1 \not= \emptyset$ and $V_2
\not=\emptyset$. Let $v_1,v_2$ be arbitrarily picked vertices such that $v_1 \in
V_1$ and $v_2 \in V_2$.

Since $G$ is strongly connected, for any $v \in V$ there exists an
arborescence of $G$ rooted at $v$. Let $A_1,A_2$ be two
arborescences rooted at $v_1$ and $v_2$, respectively. Let $p_1$ and $p_2$ be
the monomials from $P$ corresponding to the arborescences $A_1$ and $A_2$,
respectively.
In $A_1$, every vertex from $V_1$, but the root $v_1$, has exactly one incoming
edge in $A_1$. The label of every such edge necessarily belongs to $P$.
Therefore, for $A_1$, the degree of the monomial $p_1$ in $P$ is $\vdeg(p_1) =
|V_1|-1$. On the other hand, for $A_2$, all vertices from $V_1$ have an incoming
edge whose label belongs to $P$, and therefore the degree of $p_2$ in $P$ is
$\vdeg(p_2) = |V_1|$. Then $\vdeg(p_1) \not= \vdeg(p_2)$, which contradicts
Proposition~\ref{prop:all_monomials_same_degree}.
%\qed
\end{proof}

\begin{restatable}{theorem}{restatablenofacttwo}
  \label{th:nonfact2}
  Let $G$ be a digraph rooted at $v$, such that $G[V(G) \setminus \{v\}]$ is strongly
  connected, and $G$ has no non-trivial dominators. Then $\kappa(G)$ is prime.
\end{restatable}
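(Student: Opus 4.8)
The plan is to argue by contradiction, assuming $\kappa(G)=P\cdot Q$ with $P,Q$ nontrivial, and to extract from this factorization a strong combinatorial ``exchange'' property that the hypotheses will then forbid. First I would record the basic structure: since $v$ has no incoming edges, every arborescence of $G$ is rooted at $v$, so $\kappa(G)=\kappa_v(G)$. By Proposition~\ref{lem:edg_part} we have $\var(P)\cap\var(Q)=\emptyset$, and by Lemma~\ref{lem:ver_part} each vertex sends all its incoming edges to exactly one side; this partitions $V\setminus\{v\}$ into nonempty sets $V_P$ and $V_Q$ according to whether $\out(w)\subseteq\var(P)$ or $\out(w)\subseteq\var(Q)$ (the root $v$ lies on neither side). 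Because every monomial of $\kappa(G)$ is squarefree and the two variable sets are disjoint, the identity $\kappa(G)=P\cdot Q$ forces a bijection between $\arb(G)$ and the pairs (monomial of $P$, monomial of $Q$): the incoming edges of $V_P$-vertices form a monomial of $P$, those of $V_Q$-vertices a monomial of $Q$, and every such pair occurs. I would phrase this as a \emph{mixing property}: for any two arborescences $A_1,A_2$, the edge set obtained by taking the parent of each $V_P$-vertex from $A_1$ and the parent of each $V_Q$-vertex from $A_2$ is again an arborescence.

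Next I would set up the tool that turns the hypotheses into obstructions. The goal is to exhibit two arborescences whose mix is \emph{not} an arborescence, i.e. whose combined parent function contains a directed cycle. A useful lemma here is that a single prescribed edge always extends to an arborescence: since $G$ has no non-trivial dominators, $\dom{G}{y}=\{y\}$ for every $y\neq v$, so every vertex other than $y$ is reachable from $v$ by a path avoiding $y$; routing to the tail of an edge $x\to y$ while avoiding $y$, using $x\to y$ to reach $y$, and reaching all remaining vertices along paths that avoid $y$ shows that any chosen edge $x\to y$ lies in some spanning arborescence rooted at $v$. More generally, a branching $F$ extends to a spanning arborescence iff every vertex is still reachable from $v$ once the non-$F$ edges entering the heads of $F$ are deleted.

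The heart of the argument is to produce a directed cycle inside a mix. Since $G[V\setminus\{v\}]$ is strongly connected and both $V_P,V_Q$ are nonempty, there is a directed cycle $C$ in $G[V\setminus\{v\}]$ that meets both sides; split its edges into the set $C_P$ entering $V_P$-vertices and the set $C_Q$ entering $V_Q$-vertices, both nonempty, each a union of vertex-disjoint directed paths. If $C_P$ extends to an arborescence $A_1$ and $C_Q$ extends to an arborescence $A_2$, then the mix of $A_1$ and $A_2$ contains $C_P\cup C_Q=C$ and hence a directed cycle, contradicting the mixing property; this would finish the proof.

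The hard part will be guaranteeing that a suitable crossing cycle has both halves extendable, since this is exactly where the two hypotheses must cooperate. By the reachability criterion, $C_P$ fails to extend only if some nonempty set $U$ (with $v\notin U$) can be entered from the rest of the graph only through non-$C_P$ edges landing on $V_P\cap V(C)$; tracing these entry vertices, a \emph{single} entry vertex would dominate all of $U$ and hence be a non-trivial dominator, which is excluded, whereas the possibility of \emph{several} entry vertices is precisely what strong connectivity of $G[V\setminus\{v\}]$ must be used to rule out for a well-chosen $C$. I would therefore not take an arbitrary crossing cycle but construct one adapted to the root --- for instance routing it through an out-neighbour of $v$ and through the roots of the $C_P$- and $C_Q$-paths --- or argue on an inclusion-minimal unreachable set, so that the entry structure collapses to a single dominator and contradicts the hypothesis. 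Making this choice of $C$ precise, and verifying that $C_P$ and $C_Q$ then satisfy the reachability criterion simultaneously, is the main obstacle I expect.
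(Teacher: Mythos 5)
Your overall strategy coincides with the paper's: partition $V\setminus\{v\}$ into $V_P$ and $V_Q$ via Lemma~\ref{lem:ver_part}, observe that $\kappa(G)=P\cdot Q$ together with $\var(P)\cap\var(Q)=\emptyset$ and the fact that all coefficients of $\kappa(G)$ equal one forces the ``mixing property'' (the product of any monomial of $P$ with any monomial of $Q$ must again be an arborescence), and then derive a contradiction by exhibiting two arborescences whose mix contains a directed cycle crossing the partition. All of this is correct and is exactly how the paper argues.

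The gap is the existence of a crossing cycle with the required extendability, which you yourself flag as ``the main obstacle'' --- but this is not a technicality to be filled in later; it is the entire content of the theorem, and it is where both hypotheses are actually consumed. The paper isolates it as Lemma~\ref{lem:graph2}: there is a simple directed cycle $\mathcal{C}$ with an \emph{independent} vertex in $V_1$ and an independent vertex in $V_2$ (independent meaning joined to the root by a path meeting $\mathcal{C}$ only at that vertex). Note that this formulation is also cleaner than yours: given an independent $x\in V_1$ with independent path $\mathcal{P}_x$, the set $(\mathcal{C}\setminus\{e_x\})\cup\mathcal{P}_x$ (where $e_x$ is the cycle edge entering $x$) is a single directed path starting at the root, so it extends to an arborescence trivially, and it already contains every cycle edge entering a $V_2$-vertex; no reachability criterion for a union of disjoint paths detached from the root is needed. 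Proving Lemma~\ref{lem:graph2} occupies a full page of the appendix: it bootstraps from Menger's theorem (Lemma~\ref{lem:graph1} produces, through any edge $wu$, a cycle with two independent vertices), assumes the two closest independent vertices lie on the same side, and then runs an intricate splicing argument along a $v$--$u$ path avoiding $w$ --- a path whose existence is precisely what the no-non-trivial-dominators hypothesis provides --- tracking its intersections with the cycle until an independent vertex of the other colour is forced. Your sketched substitutes (``a single entry vertex would be a non-trivial dominator'' and ``several entry vertices should be excluded by strong connectivity for a well-chosen $C$'') do not hold up as stated: strong connectivity of $G[V\setminus\{v\}]$ by itself does not prevent an unreachable set from having several entry vertices all located on $V(C)\cap V_P$, and routing $C$ through an out-neighbour of $v$ gives no control over this. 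As it stands, your argument is a correct reduction of the theorem to its hardest step, not a proof of it.
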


To prove the above theorem, we use the following notion, and prove further
auxiliary lemmas. 
Given a directed simple cycle $\mathcal{C} \subseteq G$,
we say that a vertex $u \in \mathcal{C}$ is \emph{independent} from
$\mathcal{C}$, if there exists a simple path $\mathcal{P}$ connecting root $v$
to $u$, such that $\mathcal{P}$ and $\mathcal{C}$ are vertex-disjoint (except
for $u$). We call any such $\mathcal{P}$ an \emph{independent path} of $u$
(with respect to $\mathcal{C}$).

\begin{restatable}{lemma}{restatablegraphone}
  \label{lem:graph1}
  Let $G$ be a digraph as in Theorem \ref{th:nonfact2}.
  %(so $G$ has no non-trivial dominators) 
  For any edge $wu$ of $G$ such that $u,w \not= v$, there exists a simple
  directed cycle $\mathcal{C}$ containing $wu$, such that $\mathcal{C}$ has at
  least two independent vertices, and $w$ is one of those.
\end{restatable}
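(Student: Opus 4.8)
The plan is to split the statement into two parts: first, an auxiliary observation that \emph{any} simple cycle avoiding the root $v$ already contains at least two independent vertices; second, a construction of a \emph{specific} cycle through $wu$ on which $w$ itself is one of the independent vertices. Combining the two immediately yields the lemma, so the real content lies in the second part.

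For the auxiliary observation I would argue as follows. Let $\mathcal{C}$ be any simple cycle with $v \notin \mathcal{C}$. Since $G$ is rooted at $v$, every vertex of $\mathcal{C}$ is reachable from $v$; taking any simple path from $v$ to a cycle vertex and truncating it at its first vertex on $\mathcal{C}$ produces an independent path, so at least one independent vertex exists. Now suppose, for contradiction, that $x$ is the \emph{only} independent vertex of $\mathcal{C}$. Then for every $y \in \mathcal{C}$, each simple $v$--$y$ path meets $\mathcal{C}$ for the first time at an independent vertex, which must be $x$; hence every such path passes through $x$, and therefore so does every $v$--$y$ path. This means $x$ dominates every $y \in \mathcal{C}$. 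Choosing any $y \in \mathcal{C}\setminus\{x\}$ (which exists, as a simple directed cycle has at least two vertices) exhibits $x$ as a non-trivial dominator, contradicting the hypothesis. Thus every such cycle has at least two independent vertices.

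For the main construction I want a simple cycle $\mathcal{C}$ through $wu$, avoiding $v$, together with a simple $v$--$w$ path meeting $\mathcal{C}$ only at $w$. Equivalently, I need a $v$--$w$ path $\mathcal{P}$ and a $u$--$w$ path $\mathcal{R}$ that are vertex-disjoint except at $w$, with $\mathcal{R}$ avoiding $v$; then the cycle formed by $\mathcal{R}$ together with the edge $wu$, witnessed by $\mathcal{P}$, makes $w$ independent. To obtain these I would add a super-source $s$ with the two edges $s \to v$ and $s \to u$, and seek two internally vertex-disjoint $s$--$w$ paths. By Menger's theorem this reduces to showing that no single vertex separates $s$ from $w$; given such paths, deleting $s$ leaves one path starting at $v$ and one at $u$ (since $v \neq u$ are the only out-neighbours of $s$), and their internal disjointness forces $v \notin \mathcal{R}$.

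The main obstacle, and precisely where both hypotheses enter, is verifying this no-single-cut condition. The candidate $x = v$ is ruled out by strong connectivity of $G[V\setminus\{v\}]$: after deleting $v$ there is still a $u \rightsquigarrow w$ path, so the route $s \to u \rightsquigarrow w$ survives. The candidate $x = u$ is ruled out by the absence of non-trivial dominators, since $u$ does not dominate $w$, so a $v \rightsquigarrow w$ path avoiding $u$ remains and $s \to v \rightsquigarrow w$ survives. Finally, any internal candidate $x \notin \{v,u,w\}$ whose deletion disconnected $s$ from $w$ would in particular destroy all $v \rightsquigarrow w$ paths, making $x$ a dominator of $w$ distinct from both $v$ and $w$ --- again impossible. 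Since $s$ can reach $w$, Menger's theorem then delivers the two internally disjoint paths, and applying the auxiliary observation to the resulting cycle supplies the required second independent vertex.
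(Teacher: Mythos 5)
Your proof is correct, and it reaches the lemma by a genuinely different route than the paper's. The paper also invokes Menger's theorem, but only to obtain two internally vertex-disjoint $v$--$w$ paths $\mathcal{P}_1,\mathcal{P}_2$; it then builds an initial cycle $\mathcal{C}'$ from the edge $wu$ and a $u$-to-$w$ path supplied by strong connectivity of $G[V\setminus\{v\}]$, lists the intersections of $\mathcal{C}'$ with $\mathcal{P}_1\cup\mathcal{P}_2$ in reverse cyclic order from $w$, and reroutes the cycle through the slice $\mathcal{P}_1[c_k:w]$ ending at the last such intersection $c_k$, checking directly that both $w$ (via $\mathcal{P}_2$) and $c_k$ (via $\mathcal{P}_1[v:c_k]$) are independent on the modified cycle. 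You instead apply Menger once in an auxiliary digraph with a super-source $s$ and out-edges $sv,su$: your case analysis of potential one-vertex separators ($v$ excluded by strong connectivity of $G[V\setminus\{v\}]$, everything else by the absence of non-trivial dominators) is complete and correct, and the two resulting internally disjoint $s$--$w$ paths necessarily branch at $v$ and $u$, handing you in one stroke a cycle through $wu$ avoiding $v$ together with an independent path certifying $w$. The second independent vertex then comes from your separate general observation---any simple cycle avoiding the root has at least two independent vertices, since a unique independent vertex would be met first by every $v$-to-cycle path and would therefore be a non-trivial dominator---which is a clean, reusable statement that isolates where the no-dominator hypothesis does its work. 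What your approach buys is the elimination of the paper's delicate intersection-ordering surgery (which, read literally, still needs the degenerate case where the only intersection is $w$ itself to be handled); what it gives up is an explicit identification of the second independent vertex and its independent path, which the lemma does not require.
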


\begin{restatable}{lemma}{restatablegraphtwo}
  \label{lem:graph2}
  Let $G$ be a digraph as in Theorem \ref{th:nonfact2}, and let $V_1$ and $V_2$ be
  an arbitrary (non-trivial) partition of the vertices $V \setminus \{v\}$.
  There exists a simple directed cycle having an independent vertex from $V_1$
  and an independent vertex from $V_2$.
\end{restatable}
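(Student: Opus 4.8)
The plan is to first reduce the statement to a reachability condition, then seed a cycle via Lemma~\ref{lem:graph1}, and finally close the gap with an extremal argument whose surgery is the crux. I would begin by recording a convenient restatement of independence. Fix a simple directed cycle $\mathcal{C}$ and let $R$ denote the set of vertices reachable from the root $v$ in $G$ after deleting every vertex of $\mathcal{C}$ (together with $v$ itself; note that $v$ lies on no cycle, having no incoming edges). Then a cycle vertex $u$ is independent from $\mathcal{C}$ if and only if $u$ has an in-neighbour in $R$: the independent path is obtained by following a path inside $R$ to such an in-neighbour and then taking the single edge into $u$, and conversely the predecessor of $u$ on any independent path must lie in $R$. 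Under this reformulation the goal becomes: find a simple cycle that receives an edge from $R$ at some vertex of $V_1$ and at some vertex of $V_2$. As preliminary observations, since the partition is non-trivial both $V_1$ and $V_2$ are non-empty, and since $G[V\setminus\{v\}]$ is strongly connected there exist crossing edges in both directions, i.e.\ an edge from $V_1$ to $V_2$ and an edge from $V_2$ to $V_1$, both avoiding $v$.

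Next I would seed a cycle carrying one independent vertex. Pick a crossing edge $wu$ with $w\in V_1$ and $u\in V_2$ (both different from $v$). Applying Lemma~\ref{lem:graph1} to $wu$ yields a simple cycle $\mathcal{C}$ through $wu$ with at least two independent vertices, one of which is $w\in V_1$. If some independent vertex of $\mathcal{C}$ lies in $V_2$ we are already done, so the remaining task is precisely to handle the case in which $\mathcal{C}$ meets $V_2$ (it contains $u$) yet all of its independent vertices lie in $V_1$; in the reformulated language, no $V_2$-vertex of $\mathcal{C}$ receives an edge from $R$.

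The crux is resolving this last case, for which I would choose $\mathcal{C}$ extremally: among all simple cycles that contain a vertex of $V_2$ and have an independent vertex in $V_1$ (a non-empty family by the previous step), take one minimizing the number of vertices, and assume for contradiction that it still has no independent $V_2$-vertex. The aim is to exhibit a strictly shorter cycle in the same family. Since $G$ is rooted and has no non-trivial dominators --- equivalently, deleting any single internal vertex leaves all remaining vertices reachable from $v$ --- each individual cycle vertex can be bypassed; I would combine this single-vertex bypass with the independent path feeding the $V_1$-vertex and with a return arc supplied by strong connectivity of $G[V\setminus\{v\}]$, rerouting the $V_1$-to-$V_2$ portion of $\mathcal{C}$ so that some $V_2$-vertex becomes fed from the (now enlarged) region $R$ while shortcutting away at least one cycle vertex, contradicting minimality. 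The delicate points are keeping the rerouted closed walk \emph{simple} and guaranteeing that the shortcut genuinely removes a vertex while preserving an independent $V_1$-vertex; this gluing of an independent path, a bypass, and a return arc into a single simple cycle is where the real work lies and is the main obstacle.

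The resulting simple cycle carries an independent vertex in each of $V_1$ and $V_2$, which proves the lemma. This is exactly what Theorem~\ref{th:nonfact2} needs: from the two independent vertices and their independent paths one constructs two arborescences of $G$ whose $P$-degrees differ (the two ways of traversing the cycle assign a different number of its edges to $\var(P)$), contradicting Proposition~\ref{prop:all_monomials_same_degree}.
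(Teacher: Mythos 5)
Your opening moves match the paper's: the reachability reformulation of independence is correct (and harmless), and seeding with a crossing edge $wu$ plus Lemma~\ref{lem:graph1} to get a cycle with an independent vertex $w$ in one class, then reducing to the case where every independent vertex of $\mathcal{C}$ lies in the same class as $w$, is exactly how the paper's proof begins. But the proof stops where the lemma actually starts being hard. Your entire resolution of the remaining case is an extremal scheme (``take a vertex-minimal cycle in the family and produce a shorter one'') whose decisive surgery you describe only as ``combine a single-vertex bypass with the independent path and a return arc, rerouting so that some $V_2$-vertex becomes fed from $R$ while shortcutting away at least one cycle vertex,'' and you yourself flag keeping the result simple and actually losing a vertex as ``the main obstacle.'' That obstacle \emph{is} the lemma; leaving it unresolved is a genuine gap, not a deferrable detail.

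Moreover, it is not clear the minimality invariant you chose can be made to work. The paper's argument does not shorten the cycle: it uses the no-non-trivial-dominator hypothesis to produce a $v$--$u$ path $\mathcal{P}$ avoiding $w$, shows $\mathcal{P}$ must cross the arc $\mathcal{C}[x:w]$ between the two independent vertices, proves that after its last such crossing $\mathcal{P}$ avoids both independent paths, and then builds an explicit alternating sequence $c_0,\ldots,c_{2d}$ of intersection points of $\mathcal{P}$ with $\mathcal{C}[w:x]$, finishing with a two-case analysis that assembles a \emph{different} cycle ($\mathcal{C}'$ or $\mathcal{C}''$) together with explicit independent paths into each class. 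Those replacement cycles splice in segments of $\mathcal{P}$ and need not have fewer vertices than $\mathcal{C}$, so the paper's construction does not certify that a vertex-minimal counterexample can always be strictly shortened. To complete your route you would need either to exhibit the shortening surgery in full (verifying simplicity, the strict decrease, and the survival of an independent $V_1$-vertex) or to abandon the extremal framing in favour of a direct construction along the lines of the paper's intersection analysis.
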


\section{Decomposition}
\label{sec:decomposition}
In this section we present two digraph decomposition rules corresponding to factorization steps of the Kirchhoff polynomial. The rules are based on the computation of SCCs and the dominator tree of a digraph. The exhaustive application of these rules yields digraphs that are prime factors of the Kirchhoff polynomial of the original digraph.

\begin{restatable}{theorem}{restatablesccdecomposition}
\label{th:sccDecomposition}
Let $G[V_1],G[V_2],\ldots,G[V_k]$ be all strongly connected components of a connected digraph $G$.
If $G$ has exactly one initial component, then
\begin{equation}
\label{eq:scc_fact}
\kappa(G) = \kappa( \comp{G}{V_1} ) \cdot \kappa(\comp{G}{V_2}) \cdot \ldots \cdot \kappa(\comp{G}{V_k})\text{.}
\end{equation}

\end{restatable}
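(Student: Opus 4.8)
The plan is to prove the identity by exhibiting a label-preserving bijection between $\arb(G)$ and the Cartesian product $\arb(\comp{G}{V_1}) \times \cdots \times \arb(\comp{G}{V_k})$ under which the monomial of an arborescence factors as the product of the monomials of its images. Once such a bijection is in hand, the claimed identity follows immediately by distributivity: $\kappa(G) = \sum_{A}\prod_{e\in A}\lab{e}$ splits as $\prod_i \sum_{A_i}\prod_{e}\lab{e} = \prod_i \kappa(\comp{G}{V_i})$, because the monomial of each $A$ is the product of the monomials of its component pieces and every tuple of component arborescences is hit exactly once.

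To build the map, I first recall that since $G$ has a unique initial SCC, every arborescence $A$ is rooted at some vertex $r$ of the initial component $G[V_1]$, as only the minimal element of the SCC partial order can reach all vertices. I then partition the $n-1$ edges of $A$ by assigning each edge to the SCC containing its head; since every non-root vertex has exactly one incoming edge, this is a clean partition. For the initial component I keep the edges whose head lies in $V_1$; these are necessarily internal to $V_1$ (the initial SCC receives no external edges), and together they form an arborescence of $G[V_1] = \comp{G}{V_1}$ rooted at $r$. For a non-initial component $V_i$ I define $A_i$ on vertex set $V_i \cup \{v_{\aux}\}$ by keeping each internal incoming edge $wu$ (with $w,u \in V_i$) and replacing each external incoming edge $wu$ (with $w \notin V_i$) by the parallel edge $v_{\aux}u$ of $\comp{G}{V_i}$ that carries the same label $\lab{wu}$. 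A quick count shows $A_i$ has $|V_i|$ edges on $|V_i|+1$ vertices, with $v_{\aux}$ of in-degree $0$ and every $u \in V_i$ of in-degree exactly $1$.

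The crux is to verify that each $A_i$ is genuinely an arborescence and that the construction is reversible. A digraph in which one vertex has in-degree $0$ and all others in-degree $1$ is an arborescence precisely when it is acyclic, so it suffices to rule out directed cycles. Any directed cycle of $G$ lies within a single SCC, so a hypothetical cycle in $A_i$ would use only internal edges of $V_i$, which are edges of the acyclic arborescence $A$ -- a contradiction; the same observation covers the initial component. For the inverse map, given a tuple $(A_1,\ldots,A_k)$ I reassemble $A$ by taking all internal edges and expanding each $v_{\aux}u$ edge back to the unique original external edge $wu$ it records; the parallel-edge labels make this recovery unambiguous, which also yields injectivity. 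The reassembled $A$ again has a single in-degree-$0$ root and all other vertices of in-degree $1$, and is acyclic by the same ``cycles stay inside one SCC'' argument, hence an arborescence of $G$. Since the partition of edges by head-SCC is exactly inverted, the two maps are mutually inverse.

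I expect the main obstacle to be running the two acyclicity arguments in tandem with the parallel-edge label bookkeeping in $\comp{G}{V_i}$: one must check that an external edge and its $v_{\aux}$-surrogate carry identical labels (so monomials are preserved), that distinct external edges into the same vertex remain distinguishable as parallel edges (so the inverse is well defined), and that every non-initial SCC indeed receives at least one external edge -- guaranteed by the uniqueness of the initial component -- so that each $\comp{G}{V_i}$ is itself rooted at $v_{\aux}$ and contributes a nonzero $\kappa$. With these points settled, multiplicativity of the monomials across the head-SCC partition closes the argument.
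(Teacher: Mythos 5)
Your proof is correct and follows essentially the same route as the paper: both hinge on partitioning the incoming edge of each non-root vertex by the SCC containing that vertex, characterizing arborescences as in-degree-one edge selections that are acyclic, and invoking the fact that any cycle is confined to a single SCC. The paper writes this algebraically as a product of indicator sums (its Observation 5.2) rather than as an explicit bijection, but it states the bijective interpretation itself, so the two presentations are interchangeable.
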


Figure~\ref{fig:SCCD} presents an example on how SCC decomposition is employed to factorise $\kappa(G)$.
The presented decomposition uncovers a fundamental property of arborescences. Namely, it shows that the arborescences of a digraph $G$ are in a one-to-one correspondence 
with all combinations of subdigraphs of $G$ obtained 
following the procedure: i) Pick an arborescence from the initial SCC of $G$. ii) For every non-initial SCC of $G$, pick as set $W$ an arbitrary (nonempty) subset of all vertices with incoming edges from outside of this SCC, and pick the spanning forest of this SCC rooted in $W$.

An equivalent formulation (used in the proof of Theorem~\ref{th:sccDecomposition}) is that any cycle can only be contained in a single SCC of $G$.
This property allows us to relate the Kirchhoff polynomial of a digraph to the product of Kirchhoff polynomials of digraphs derived from its SCCs.

\begin{figure}[t!]
\centering\includegraphics[width=\linewidth]{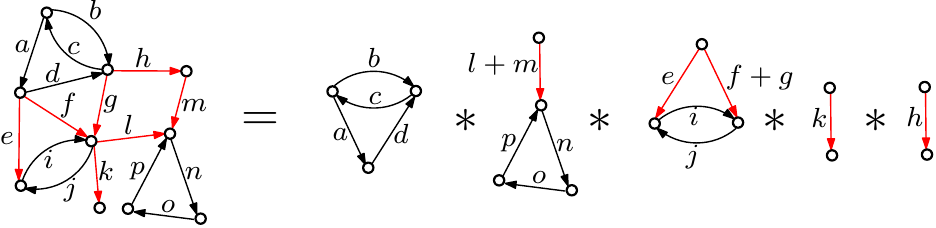}
\caption{Example of a digraph (left) and its decomposition using strongly connected components (right). The edges in black are part of a strongly connected component and the edges in red are connecting different SCCs.}
\label{fig:SCCD}
\end{figure}

Theorem~\ref{th:sccDecomposition} allows us to factorize $\kappa(G)$ of any connected digraph $G$ with at least two SCCs. Yet, it is not guaranteed that the obtained factorization is prime or non-trivial. 
Consider the case when the initial SCC is composed of a single vertex $v$, then the theorem states that $\kappa(G)=\kappa(G[\lbrace v \rbrace])\cdot\kappa(\comp{G}{V\setminus \lbrace v \rbrace})$. We can see that this is a trivial factorisation since $\kappa(G[\lbrace v \rbrace])=1$ 
and $\kappa(\comp{G}{V\setminus \lbrace v \rbrace})=\kappa(G)$ (here, $\comp{G}{V\setminus \lbrace v \rbrace}$ renames the vertex $v$ to $v_{aux}$ but preserves the arborescences).
In Theorem~\ref{th:nonfact1} we proved that the Kirchhoff polynomials of strongly connected digraphs are prime, which implies that the factor corresponding to the initial SCC is always prime. We note that the rest of the factors cannot be further non-trivially decomposed just using Theorem~\ref{th:sccDecomposition}. Their primality is unsettled because they lack the property of strongly connected digraphs, namely, that any vertex of the digraph is the root of an arborescence, due to possessing a single root for all arborescences (the auxiliary vertex). Thereby, we proceed to studying the decomposability of non-initial SCC factors.

With Theorem~\ref{th:domination_decomposition} we specify a decomposition step for non-initial SCC factors. More precisely, we provide an additional factorization rule of $\kappa(G)$ by using vertex domination relations (with respect to the root vertex $v_{aux}$). 

\begin{restatable}{theorem}{restatabledominationdecomposition}
\label{th:domination_decomposition}
Let $G$ be a digraph rooted at $v$ and let $u$ be an arbitrarily picked vertex of $G$. 
Denote $D = \dom{G}{u}$.
Then
\begin{equation}
\label{eq:kappa4}
\kappa(G) = \kappa( \rooot_u (G[D]) ) \cdot \kappa( G(D \to u ) ).
\end{equation}
\end{restatable}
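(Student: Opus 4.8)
The plan is to establish the factorization by exhibiting an explicit bijection between arborescences of $G$ and pairs of arborescences of the two factor digraphs, so that monomials multiply correctly. Let me set up the key structural observation. Since $G$ is rooted at $v$ and $D = \dom{G}{u}$ is the set of vertices dominated by $u$, every directed path from $v$ into $D$ must pass through $u$. This means that in \emph{any} arborescence $A$ of $G$, the unique path from $v$ to each vertex of $D$ enters $D$ precisely at $u$, and thereafter stays within $D$. Consequently, the edges of $A$ split cleanly into two groups: the edges incident to vertices of $D\setminus\{u\}$ (all of which lie inside $G[D]$ and form an arborescence of $G[D]$ rooted at $u$, i.e.\ an arborescence of $\rooot_u(G[D])$), and all remaining edges (which, after contracting $D$ to $u$, form an arborescence of $G(D\to u)$).

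First I would verify the forward direction: given $A \in \arb(G)$, define $A_1$ to be the restriction of $A$ to the edges lying within $D$, and $A_2$ to be the edges of $A$ not internal to $D$, with their endpoints in $D$ identified with $u$. I must check that $A_1$ is a genuine arborescence of $\rooot_u(G[D])$ — that is, it spans $D$, is acyclic, and is rooted at $u$ — which follows from the domination property forcing $u$ to be the unique entry point of $D$. Symmetrically, I must check $A_2$ is an arborescence of $G(D\to u)$; here I would invoke the edge-correspondence bullets in the Contraction definition to confirm that each edge of $A_2$ has a well-defined image in $G(D\to u)$ and that the contracted object spans $V\setminus D \cup \{u\}$ and remains rooted at $v$. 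The monomial of $A$ factors as the monomial of $A_1$ times the monomial of $A_2$, since the two edge sets partition $E(A)$ and the contraction preserves labels on the relevant edges.

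Next I would construct the inverse map: given $A_1 \in \arb(\rooot_u(G[D]))$ and $A_2 \in \arb(G(D\to u))$, I glue them by taking the union of $A_1$'s edges with the preimages in $G$ of $A_2$'s edges, re-expanding the contracted vertex $u$ back to the set $D$. The edge incoming to $u$ in $A_2$ becomes the single edge that connects the rest of the arborescence to the vertex $u$, and each edge leaving $u$ in $A_2$ lifts to an edge leaving some vertex of $D$ — and here I must argue that this lift is unique and that the result is a valid arborescence of $G$. This is where I expect the \textbf{main obstacle} to lie: the contraction $G(D\to u)$ can create parallel outgoing edges from $u$ (as noted after the Contraction definition), so an edge of $A_2$ leaving $u$ may correspond to several possible source vertices in $D$. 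I would resolve this by Observation~1 (merging parallel edges sums their labels), which ensures the label bookkeeping matches exactly when summed over all arborescences; the bijection is then most cleanly stated at the level of the polynomials, where summing over the expanded choices reproduces precisely the product $\kappa(\rooot_u(G[D])) \cdot \kappa(G(D\to u))$.

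Finally, having shown the maps are mutually inverse (or that the polynomial identity holds after accounting for parallel-edge label sums), I would conclude that $\kappa(G)$ equals the claimed product, since every monomial on the left corresponds to exactly one product of monomials on the right and vice versa. I would double-check the degenerate cases — $u = v$, or $D = \{u\}$ (giving a trivial factor $\kappa(\rooot_u(G[D])) = 1$), or $D = V$ — to confirm the identity reduces correctly and the stated factorization remains valid.
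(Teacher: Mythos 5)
Your proposal is correct and follows essentially the same route as the paper: the paper phrases the bijection via the acyclic in-edge-selection formula of Observation~\ref{obs:acyclic} rather than mapping arborescences directly, but the substance is identical — no edges from $V\setminus D$ into $D\setminus\{u\}$, no arborescence uses an edge into $u$ originating inside $D$, and every cycle either lies in $D\setminus\{u\}$ or survives the contraction $G(D\to u)$. The ``main obstacle'' you identify with parallel outgoing edges of $u$ is handled in the paper simply by keeping the contracted object as a multidigraph whose parallel edges retain their distinct labels (so each edge of $A_2$ lifts uniquely), though your alternative of merging labels and summing over source choices also yields the same polynomial identity.
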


Similarly to SCC decomposition, one can interpret this result as either a one-to-one correspondence between arborescences of $G$ and all combinations of the arborescences of its two factors, or as a statement on the structure of cyclic sets of edges in $G$ (any cycle is either a cycle when restricted to $D$, or remains a cycle when $D$ is contracted to $u$).

\begin{figure}[t!]
\centering\includegraphics[width=\linewidth]{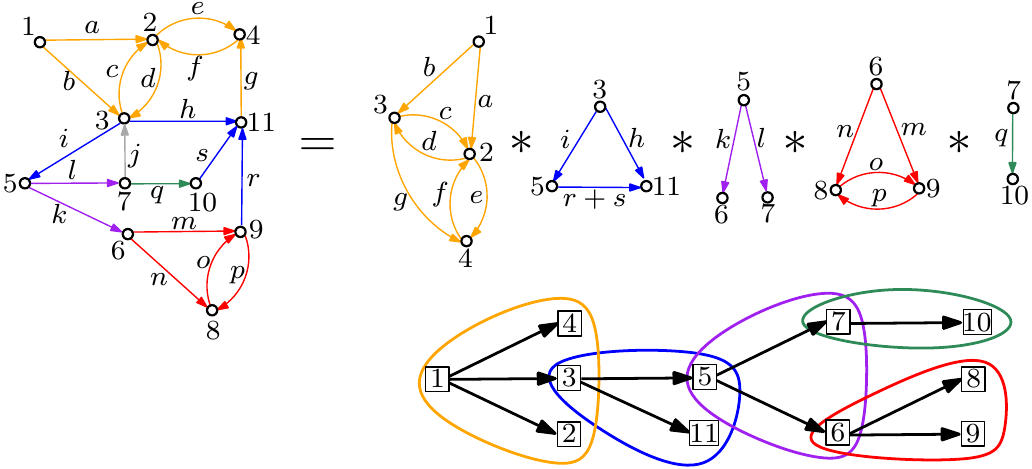}
\caption{Kirchhoff polynomial factorization of a digraph with respect to its dominator tree (above). The dominator tree of the example digraph (below). Color coding corresponds to immediate domination.}
\label{fig:Dom}
\end{figure}

Applying Theorem~\ref{th:domination_decomposition} to all domination relations defined by the dominator tree $T(G)$ extends the decomposition to the whole dominator tree at once. An illustration of the dominator decomposition rule can be seen in Figure~\ref{fig:Dom}.

\begin{corollary}
\label{cor:tree_decomp}
Let $G$ be a digraph rooted at $v$. Then
\begin{equation}
\label{eq:tree_decomp}
\kappa(G) = \prod_{u \in V} \kappa_u(H[\{u,u_1,\ldots,u_i\}] )
\end{equation}
where  $H = G( \dom{G}{u_1} \to u_1; \ldots ;\dom{G}{u_i}\to u_i)$ and $\{u_1,\ldots,u_i\} = \imm{G}{u}$.
\end{corollary}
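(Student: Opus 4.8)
The plan is to prove the statement by induction on $|V|$, repeatedly invoking Theorem~\ref{th:domination_decomposition} to strip off one branch of the dominator tree $T(G)$ at a time. The base case is the singleton digraph, where both sides equal $1$ (the only term is $u=v$ with $\imm{G}{v}=\emptyset$, so $H[\{v\}]$ is a single vertex and $\kappa_v(H[\{v\}])=1$). For the inductive step, let $v$ be the root of $G$ and write $\imm{G}{v}=\{v_1,\dots,v_k\}$ with $D_j=\dom{G}{v_j}$. First I would apply Theorem~\ref{th:domination_decomposition} at $u=v_1$, obtaining $\kappa(G)=\kappa(\rooot_{v_1}(G[D_1]))\cdot\kappa(G(D_1\to v_1))$; since $\rooot_{v_1}$ forces the root to be $v_1$, the first factor is $\kappa_{v_1}(G[D_1])$. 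I would then iterate this on the contracted digraph $G(D_1\to v_1)$ at $v_2$, then $v_3$, and so on, peeling off all $k$ branches, which yields
\[
\kappa(G) = \Bigl(\prod_{j=1}^{k} \kappa_{v_j}(G[D_j])\Bigr)\cdot \kappa\bigl(G(D_1\to v_1;\dots;D_k\to v_k)\bigr).
\]

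Next I would identify the two types of factors. For the last factor, the key observation is that the sets $D_1,\dots,D_k$ partition $V\setminus\{v\}$ (this is exactly the partition property of $\imm{G}{v}$ over $\dom{G}{v}\setminus\{v\}$ noted after the definition of immediate domination, using $\dom{G}{v}=V$). Hence contracting each $D_j$ to $v_j$ leaves precisely the vertex set $\{v,v_1,\dots,v_k\}$, so $G(D_1\to v_1;\dots;D_k\to v_k)$ is nothing but $H=G(\dom{G}{v_1}\to v_1;\dots;\dom{G}{v_k}\to v_k)$ restricted to its own vertices, i.e.\ $H[\{v,v_1,\dots,v_k\}]$; as it is rooted at $v$, its Kirchhoff polynomial equals $\kappa_v(H[\{v,v_1,\dots,v_k\}])$, which is exactly the $u=v$ term of the claimed product. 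For each peeled branch, $G[D_j]$ is a digraph rooted at $v_j$ with strictly fewer vertices (as $v\notin D_j$) whose dominator tree is the subtree of $T(G)$ hanging at $v_j$; applying the induction hypothesis to $\kappa_{v_j}(G[D_j])$ expands it into the product of local factors $\kappa_w(\cdots)$ over $w\in D_j$. Collecting $v$'s factor with the contributions of all branches then recovers $\prod_{u\in V}\kappa_u(H[\{u,u_1,\dots,u_i\}])$.

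The main obstacle will be the bookkeeping needed to make the induction fit together cleanly, which I would isolate into two \emph{locality lemmas}. First, I must check that iterating Theorem~\ref{th:domination_decomposition} across sibling branches is legitimate: after contracting $D_1,\dots,D_{j-1}$, the digraph is still rooted at $v$, the set dominated by $v_j$ is still $D_j$, and $G[D_j]$ is unchanged. This follows because every path from $v$ into a branch $D_j$ must enter through its root $v_j$ (as $v_j$ dominates all of $D_j$), so any edge leaving one branch and reaching $\dom{G}{v_{j'}}$ can only target $v_{j'}$ itself; contracting a sibling branch therefore merely rewires edges among the top-level vertices $\{v,v_1,\dots,v_k\}$ and touches neither the interior of $D_j$ nor the dominators of $v_j$. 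Second, I must verify that the local factor of a vertex $w\in D_j$ computed inside $G[D_j]$ agrees with the one computed in $G$; this reduces to the fact that domination is local to downward-closed dominated sets, namely $\dom{G[D_j]}{w}=\dom{G}{w}$ and $\imm{G[D_j]}{w}=\imm{G}{w}$ for all $w\in D_j$, so that the induced-and-contracted subgraphs $H[\{w,\dots\}]$ coincide. Both lemmas are routine to verify but are the crux that licenses the recursion.
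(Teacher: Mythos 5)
Your proposal is correct and matches the paper's (implicit) argument: the paper proves this corollary simply by iterating Theorem~\ref{th:domination_decomposition} along the dominator tree $T(G)$ (bottom-up in postorder in Algorithm~\ref{alg:decomp}), and your top-down induction with the two locality lemmas is the same idea with the bookkeeping made explicit. The only nitpick is that $G[D_j]$ need not literally be rooted at $v_j$ (back edges into $v_j$ may survive), so the induction hypothesis should be applied to $\rooot_{v_j}(G[D_j])$, which changes nothing since $\kappa_{v_j}(G[D_j]) = \kappa(\rooot_{v_j}(G[D_j]))$.
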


Recall Theorem~\ref{th:nonfact2} which states that if there are no non-trivial domination relations in $G$ ($G$ being rooted at $v$ for which no non-trivial decomposition by SCCs applies) then $\kappa(G)$ is prime. Corollary~\ref{cor:tree_decomp} ensures that all non-trivial dominator relations are eliminated but that cannot guarantee the primality of the decomposed factors since they can be SCC decomposable. Therefore, in order to obtain a complete prime factorization of the input digraph we need to use both rules of decomposition in an alternating fashion. One might expect that this process requires deep recursion. However, for any digraph a constant depth of recursion is needed. More precisely, it is enough to apply in sequence SCC factorization, dominators factorization, and SCC factorization to get prime components. This upper bound on the recursion depth is tight, as the example in Figure~\ref{fig:counter} shows.

\begin{figure}[t!]
\centering\includegraphics[width=0.3\linewidth]{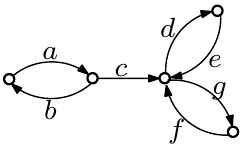}
\caption{Example of a digraph that requires SCC decomposition followed by dominators decomposition and another SCC decomposition to obtain prime factorization.}
\label{fig:counter}
\end{figure}

\begin{restatable}{theorem}{restatableththreelevels}
\label{th:3levels}
If $G$ is rooted at $v$, then any $H$ being a factor obtained by the application of rule \eqref{eq:tree_decomp} has the following property: all factors of $H$ obtained by applying \eqref{eq:scc_fact} are prime.
\end{restatable}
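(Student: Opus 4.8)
The plan is to combine the SCC decomposition (Theorem~\ref{th:sccDecomposition}) with the primality criterion of Theorem~\ref{th:nonfact2}. Fix $u\in V$ and let $H$ be the corresponding factor produced by \eqref{eq:tree_decomp}; by construction $H$ is rooted at $u$ with vertex set $V(H)=\{u\}\cup\imm{G}{u}$, where the children $\{u_1,\ldots,u_i\}=\imm{G}{u}$ arise from contracting each $\dom{G}{u_j}$ to $u_j$. Applying \eqref{eq:scc_fact} to $H$ gives $\kappa(H)=\prod_j \kappa(\comp{H}{V_j})$ over the SCCs $V_1,\ldots,V_k$ of $H$. Since $u$ has no incoming edges in $H$, its initial SCC is the singleton $\{u\}$, whose factor is the unit $\kappa=1$; every other $V_j$ is a subset of $\imm{G}{u}$ and $u\notin V_j$. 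So it suffices to prove that each non-initial factor $\comp{H}{V_j}$ is prime.

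Each such $\comp{H}{V_j}$ is rooted at its auxiliary vertex $v_{\aux}$, and deleting $v_{\aux}$ leaves exactly $H[V_j]$, which is strongly connected by definition of an SCC. Hence Theorem~\ref{th:nonfact2} applies provided $\comp{H}{V_j}$ has no non-trivial dominators, and this is the only remaining point to verify.

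To establish it, I would show that no vertex $u_b\in V_j$ dominates a distinct $u_a\in V_j$ in $\comp{H}{V_j}$, by lifting paths from $G$. Since $u_a\in\imm{G}{u}$, its immediate dominator in $G$ is $u$, so the sibling $u_b$ does not dominate $u_a$; hence $G$ contains a simple path from $v$ to $u_a$ avoiding $u_b$, and as $u$ dominates $u_a$ this path passes through $u$, its suffix $P$ from $u$ to $u_a$ lying entirely in $\dom{G}{u}$ (leaving and re-entering $\dom{G}{u}$ would revisit $u$). Using the partition $\dom{G}{u}\setminus\{u\}=\bigsqcup_{u_\ell\in\imm{G}{u}}\dom{G}{u_\ell}$ together with the fact that every edge entering a region $\dom{G}{u_\ell}$ from outside must hit $u_\ell$, the contraction collapses $P$ to a directed walk in $H$ from $u$ to $u_a$ that avoids $u_b$ (it never enters $\dom{G}{u_b}$, as that would require passing $u_b$). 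Letting $p$ be the vertex immediately following the last vertex of this walk lying outside $V_j$, we get that $p\in V_j$ is an entry vertex, so $v_{\aux}p\in E(\comp{H}{V_j})$, while the remaining portion of the walk from $p$ to $u_a$ stays inside $V_j$ and avoids $u_b$. Concatenating yields a $v_{\aux}$-to-$u_a$ walk in $\comp{H}{V_j}$ avoiding $u_b$, so $u_b$ does not dominate $u_a$.

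As $u_a,u_b$ were arbitrary, $\comp{H}{V_j}$ has no non-trivial dominators, and Theorem~\ref{th:nonfact2} yields that $\kappa(\comp{H}{V_j})$ is prime; together with the trivial initial factor this proves the claim. I expect the main difficulty to be the path-lifting of the previous paragraph: verifying that the contraction is clean (each dominated region $\dom{G}{u_\ell}$ has $u_\ell$ as its unique entry point) so that avoidance of $u_b$ survives both the collapse into $H$ and the restriction into $\comp{H}{V_j}$, and that the correct choice of entry vertex $p$ — the \emph{last} entry into $V_j$ rather than the first — guarantees that the tail of the walk remains within $V_j$.
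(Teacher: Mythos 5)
Your proof is correct and follows essentially the same route as the paper's: both reduce the claim to verifying the hypotheses of Theorem~\ref{th:nonfact2} for each non-initial SCC-component $\comp{H}{V_j}$, namely that it is rooted, strongly connected off the root, and has no non-trivial dominators. The only organizational difference is that the paper first shows $H$ itself has no non-trivial dominators (contraction of the sets $\dom{G}{u_\ell}$ preserves the domination relation) and then transfers this to $\comp{H}{V_j}$ by taking path suffixes, whereas you lift the $u_b$-avoiding path directly from $G$ in a single step.
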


\paragraph{Pseudocode and time complexity.}
The pseudocode of the prime decomposition algorithm through SCCs and dominator relations can be found as Algorithm~\ref{alg:decomp} in Appendix~\ref{sec:pseudocode}.
Observe that obtaining the SCC decomposition takes time $\bigo(|V|+|E|)$ by Tarjan's strongly connected components algorithm \cite{tarjan1972depth}. Similarly, by the result of Alstrup \etal~\cite{Alstrup97dominatorsin}, one can find the dominator tree of a digraph in time $\bigo(|V|+|E|)$ .

\begin{corollary}
There is an algorithm that finds the decomposition of $G$ into prime components $G_1,\ldots,G_k$ in time $\bigo(|V|+|E|)$.
\end{corollary}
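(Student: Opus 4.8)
The plan is to combine the three-round bound of Theorem~\ref{th:3levels} with the two linear-time subroutines already cited---Tarjan's SCC algorithm~\cite{tarjan1972depth} and the dominator-tree algorithm of Alstrup \etal~\cite{Alstrup97dominatorsin}---and then argue that the total work summed over all intermediate components remains linear. Concretely, I would implement the algorithm in three passes. First, compute the SCC decomposition of $G$ and realize the factorization of Theorem~\ref{th:sccDecomposition}, producing the digraphs $\comp{G}{V_1},\ldots,\comp{G}{V_k}$. Second, for each such component apply the dominator-based rule of Corollary~\ref{cor:tree_decomp}, which requires computing one dominator tree per component and performing the prescribed contractions. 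Third, apply SCC decomposition once more to each factor produced in the second pass. By Theorem~\ref{th:3levels} every resulting digraph is a prime component, so correctness is immediate.

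For the running time, I would first observe that each of the three primitive operations---computing SCCs, computing a dominator tree, and realizing the corresponding factorization---runs in time linear in the size of the digraph it is applied to. The nontrivial point is therefore to bound the \emph{sum} of the sizes of all digraphs encountered across the three passes. I would bound this by charging vertices and edges of the intermediate components back to $V$ and $E$. For the vertex count, the SCCs induce a partition $V_1,\ldots,V_k$ of $V$, so $\sum_i |V_i| = |V|$, while each component contributes at most one auxiliary vertex; hence the first pass produces $\bigo(|V|)$ vertices in total, and the same partition-plus-constant accounting applies to the contractions of the dominator pass (each vertex appears as a dominator in one factor and as an immediately dominated child in at most one other). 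For the edge count, Proposition~\ref{lem:edg_part} guarantees that the edge sets (variables) of distinct factors are disjoint, so the original edges are distributed, not duplicated, among the factors; the only additional edges are the auxiliary incoming edges created in $\comp{G}{V'}$ and the edges redirected by contraction, each of which can be charged to a distinct original edge. Summing, the total size over all passes is $\bigo(|V|+|E|)$, and running the linear-time primitives on digraphs of total size $\bigo(|V|+|E|)$ costs $\bigo(|V|+|E|)$ overall.

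The step I expect to be the main obstacle is the edge-counting bookkeeping in the dominator pass: unlike the clean SCC partition, the contraction operation $G(D\to u)$ deletes edges inside $D$, keeps edges into $u$, and redirects edges leaving $D$ to originate from $u$, so I must verify that no original edge is copied into more than a constant number of the factors $\kappa_u(H[\{u,u_1,\ldots,u_i\}])$ of Corollary~\ref{cor:tree_decomp}. I would handle this by exploiting the partition of $\dom{G}{y}\setminus\{y\}$ into the sets $\dom{G}{z}$ over $z \in \imm{G}{y}$ noted after the immediate-domination definition: this partition pins each vertex, and hence each of its incoming edges, to a unique level of the dominator tree, so each edge survives (possibly redirected) into exactly one factor. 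Once this is established, the global linear bound follows, and with it the claimed $\bigo(|V|+|E|)$ algorithm.
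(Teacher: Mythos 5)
Your proposal is correct and follows essentially the same route as the paper: the paper derives the corollary from Theorem~\ref{th:3levels} together with the linear-time SCC and dominator-tree subroutines of Tarjan and Alstrup \etal, realized as the three-pass procedure of Algorithm~\ref{alg:decomp}. Your additional bookkeeping showing that the total size of all intermediate components stays $\bigo(|V|+|E|)$ is a detail the paper leaves implicit, and your accounting of it is sound.
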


\section{Applications}
\label{sec:applications}

Our decomposition technique has numerous uses, owing to its properties coming from the prime factorization of the Kirchhoff polynomial, preservation of structural properties, and compression. 
Here we illustrate the fundamental insight provided by the decomposition, namely that
practical arborescence enumeration is not directly dependent on the exponentially growing number of arborescences but, rather, depends on digraph structure.
We show that this structure can be uncovered in real-life networks by means of appropriate edge deletion-contraction.
Further, we refer to several real-life applications that immediately benefit from the decomposition and the compact representation.

\paragraph*{Class of practically enumerable (PE) digraphs.}
For some digraphs, our decomposition algorithm results in a great speed-up compared to
standard techniques, and for some digraphs it does not help much (just think of
a complete digraph).
Our factorization rules, however, hint at what the structure of a digraph needs to
be, in order to allow an effective (practical) enumeration of
arborescences. 
An immediate example are digraphs, whose Kirchhoff polynomials factorize
exclusively to small (say, constant size) prime factors.
These digraphs (and others) form the class of \emph{practically enumerable
digraphs} -- the class $\text{PE}$ for short.
In this paper we do not aim to classify PE exactly.

We note that the mechanics of the SCC decomposition
(Theorem~\ref{th:sccDecomposition}) and domination decomposition
(Theorem~\ref{th:domination_decomposition}) can be reversed to define a
build-up procedure for generating non-trivial digraphs belonging to the
class PE. For example, given two rooted digraphs $P,G\in \text{PE}$, we can obtain
the digraph $P\circ G$ in which all vertices of $G$ are dominated by one
particular vertex $p$ from $P$ and $\kappa(P\circ G)=\kappa(P)\cdot\kappa(G)$.
Additionally, $G$ can be composed of multiple SCCs factorizing as
$\kappa(G)=\kappa(P_1)\cdot\ldots \cdot \kappa(P_n)$, where $P_1,\ldots,P_n\in
\text{PE}$. 
Interestingly, many digraphs obtained from real-life applications possess rich hierarchical and modular structure \cite{barabasi2004network,meunier2010modular}, which facilitates the practical enumeration of their arborescences.

\paragraph{Recursive enumeration.}
The classical edge deletion-contraction result allows to compute $\kappa(G)$
recursively even when further factorization is impossible:
\begin{observation}[Deletion-Contraction]
\label{obs:delcontr}
Let $G$ be a digraph and $uv \in E$ be an arbitrarily chosen edge. Then
$$\kappa(G) = \kappa(G \setminus \{uv\}) + \lab{uv} \cdot \kappa(G( \{u,v\} \to
u)).$$ 
\end{observation}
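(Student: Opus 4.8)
The plan is to partition $\arb(G)$ according to whether the chosen edge $uv$ participates in an arborescence, and to match each part with one term on the right-hand side. Writing the Kirchhoff polynomial as a sum over arborescences, I would split
\[
\kappa(G) = \sum_{\substack{A \in \arb(G)\\ uv \notin A}} \prod_{e \in A} \lab{e} \;+\; \sum_{\substack{A \in \arb(G)\\ uv \in A}} \prod_{e \in A} \lab{e}.
\]
The first sum is immediate: a subdigraph is an arborescence of $G\setminus\{uv\}$ if and only if it is an arborescence of $G$ that avoids $uv$, and conversely, so this sum equals $\kappa(G \setminus \{uv\})$ by definition.

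For the second sum, since every such $A$ uses $uv$ exactly once, I would factor out $\lab{uv}$, reducing the claim to the identity
\[
\sum_{\substack{A \in \arb(G)\\ uv \in A}} \prod_{e \in A \setminus \{uv\}} \lab{e} = \kappa\bigl(G(\{u,v\} \to u)\bigr).
\]
The core of the argument is a label-preserving bijection $\phi$ between arborescences of $G$ containing $uv$ and arborescences of the contracted digraph $G(\{u,v\} \to u)$, given by contracting the edge $uv$ (that is, identifying $u$ and $v$). Each surviving edge of $A$ maps to its corresponding edge in $G(\{u,v\} \to u)$ keeping its label, so once $\lab{uv}$ is stripped off the monomials match term by term.

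To see that $\phi$ is well defined I would use that any $A$ containing $uv$ is rooted at some $r \neq v$ (the edge $uv$ is the unique incoming edge of $v$ in $A$), so no other edge of $A$ enters $v$ and, by acyclicity, the reverse edge $vu$ is absent; hence every edge of $A \setminus \{uv\}$ is present, as a corresponding edge, in $G(\{u,v\} \to u)$, and contracting a tree edge of an arborescence again yields an arborescence on the contracted vertex set. For the inverse map I would un-contract: in an arborescence $A'$ of $G(\{u,v\} \to u)$ the merged vertex $u$ has at most one incoming edge, which must enter the original $u$ since edges into $v$ from outside were deleted by the contraction, while each outgoing edge records, through its identity as one of the (possibly parallel) outgoing copies at $u$, whether its tail in $G$ was $u$ or $v$; reinstating $uv$ and restoring these tails recovers a unique arborescence of $G$ using $uv$.

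The main obstacle is the bookkeeping at the merged vertex: one must check that both directions really land among arborescences (spanning, acyclic, correct in- and out-degrees) and that labels are preserved throughout. This is exactly where the contraction rules are decisive --- the deletion of edges entering $v$ from outside and of edges inside $\{u,v\}$, together with the fact that outgoing edges of $v$ reappear as parallel outgoing edges of $u$ carrying their original labels. Treating these parallel copies as formally distinct labelled edges, in the spirit of the first Observation, is what keeps $\phi$ and its inverse well defined and the two monomial sums equal.
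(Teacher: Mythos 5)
Your proof is correct and is the standard bijective argument: split $\arb(G)$ by whether $uv$ is used, identify the first part with $\arb(G\setminus\{uv\})$, and contract $uv$ to biject the second part with $\arb(G(\{u,v\}\to u))$, using that $uv$ is the unique edge of such an arborescence entering $v$ (so no cycle through $\{u,v\}$ survives and the inverse un-contraction is well defined via the parallel outgoing copies at the merged vertex). The paper states this as a classical observation and gives no proof, so your argument simply supplies the details it leaves implicit; your handling of parallel edges via the paper's first Observation is exactly the right way to keep the label bookkeeping consistent.
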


This helps us further exploit the prime decomposition by formulating a recursive
algorithm alternating between prime factorization and edge deletion-contraction
in every prime factor. It is an open problem which edges to delete-contract in
order to obtain a maximally compressed Kirchhoff polynomial. However, various
compression heuristics can be employed, \eg by choosing i) an edge whose
deletion produces the largest number of SCCs in the digraph (a strong bridge), ii)
an edge whose deletion introduces the most non-trivial domination relationships
among vertices, or iii) an edge whose contraction eliminates the greatest number
of edges.

We applied the recursive algorithm, using the heuristic of deletion-contraction
of the edge that maximizes the number of SCCs, to a wide spectrum of real-life and
randomly generated digraphs. Performance assessment of the recursive
algorithm is shown in Table~\ref{tab:networks}, and detailed description of the
digraphs can be found in Appendix~\ref{app:expgraphs}. We obtain small running
times and significant compression ratios ranging from $4.9$, for a small
biochemical network with $26$ arborescences, to $10^{372}$, for a large
randomly generated PE digraph with $2^{10}\cdot 3^{775}$ arborescences. The significant
compression indicates that the investigated digraphs are PE digraphs in
practice, and that the recursive algorithm performs well to uncover their 
structure. Our results further show that we can efficiently generate and handle
Kirchhoff polynomials of considerable size originating from practically relevant
digraphs.

\begin{table}[t!]
\centering
      \caption{Evaluation of the performance of the recursive algorithm on
    experimentally and randomly generated digraphs. Lengths of the compressed
    representation, $\vert C(\kappa(G)) \vert$, and of the expanded one, $\vert
    \kappa(G) \vert$, are measured in the number of symbols (edge labels and
    arithmetic operators).
    }
  \label{tab:networks}
  \begin{tabular}{lrrrrrr R{0.16\linewidth} R{0.05\linewidth}}
    \hline\noalign{\smallskip}
$G$& $\vert V\vert$& $\vert E\vert$& $\vert SCC \vert$& $|\arb(G)|$& $\vert \kappa(G) \vert$& $\vert C(\kappa(G)) \vert$& $ \vert \kappa(G) \vert / \vert C(\kappa(G)\vert$   \\
	\noalign{\smallskip}
	\hline
	\noalign{\smallskip}
ColE1&		$6$&		$10$&	$1$&	$26$&						$311$&							$63$&				$4.9$&				\\
Brain&		$8$&		$28$&	$8$&	$5,040$&					$80,639$&						$55$&				$1466$&				\\
MDH&		$9$&		$18$&	$1$&	$141$&						$2,537$&						$199$&				$12.7$&				\\
KNF33&		$9$&		$24$&	$1$&	$1,728$&					$31,103$&						$2,673$&			$11.6$&				\\
PHO5&		$12$&		$35$&	$1$&	$53,376$&					$1,281,023$&					$19,435$&			$65.9$&				\\
GR&			$13$&		$32$&	$1$&	$5,057$&					$131,481$&						$5,583$&			$23.5$&				\\
ERG&		$17$&		$55$&	$1$&	$44,741,862$&				$\sim 1.5\cdot 10^9$&			$134,483$&			$11,311$&			\\
SF1&		$18$&		$48$&	$1$&	$323,167$&					$11,634,011$&					$7,913$&			$1470$&				\\
SF3&		$40$&		$105$&	$1$&	$2,598,830,145$&			$\sim 2 \cdot 10^{11}$&			$815,411$&			$254,971$&			\\
S838&		$512$&		$819$&	$391$&	$\sim 6.2\cdot 10^{52}$&	$\sim 6.3\cdot 10^{55}$&		$1,515$&			$10^{52}$&			\\
PE45&		$1,561$&	$3,125$&$6$&	$2^{10}\cdot 3^{775}$& 		$\sim 2^{21} \cdot 3^{775}$&	$7,809$& 			$10^{372}$&			\\
    \hline
    \end{tabular} 
 
\end{table}

\paragraph*{Steady state on digraphs governed by Laplacian dynamics.}
A prototypical task for which arborescence enumeration is essential is
finding steady states on digraphs with Laplacian dynamics. Particular examples
include the symbolic derivation of kinetic equations~\cite{qi2009generating} and
steady state expressions for biochemical systems~\cite{gunawardena2012linear}.
However, the exponential growth of the number of arborescences with
digraph size has limited practical analyses only to small digraphs. In
Table~\ref{tab:networks} we show that the recursive algorithm performs well and
produces a compact output on various biochemical networks (described in
Table~\ref{tab:netref} in the appendix) that are regarded tedious to study
symbolically.

\paragraph*{Greatest common divisor of enumerators.}
The prime decomposition permits fast calculation of the greatest common divisor
(GCD) of Kirchhoff polynomials, bypassing costly symbolic manipulations of
multivariate polynomials. GCDs are required for simplifying ratios of Kirchhoff
polynomials and defining equivalence classes on them. This is an important tool
for the study of ratios of steady states on Laplacian systems and facilitates
the extension of the notion of spanning edge betweenness
\cite{teixeira2013spanning} to digraphs.

\paragraph*{Enumeration of specific arborescences.}
In many situations, one needs to enumerate/count only arborescences of certain
type, e.g., those of minimum cost, or those of bounded degree.
Our preliminary experimental investigation suggests that our decomposition
technique might be of practical/theoretical use. 

\paragraph*{Random sampling.}
Finally, note that we can sample, uniformly at random, an arborescence
from the compressed form of $\kappa(G)$ as follows: pick a random monomial from
each of the (also recursively computed) components.

\section{Conclusions}
\label{sec:conclusion}

We studied the fundamental problem of enumerating arborescences
via an explicit calculation of the Kirchhoff polynomial $\kappa(G)$ of a digraph
$G$. 
We provided a graph-theoretic structural characterization of digraphs corresponding
to prime factors of $\kappa(G)$. Based on this we presented a linear-time digraph decomposition technique that corresponds to
the prime factorization of $\kappa(G)$. Combining the decomposition with the
standard edge deletion-contraction recursion, we obtained a recursive algorithm
that further compresses the representation of $\kappa(G)$. 
We tested the algorithm on digraphs from real-world applications, and
obtained short running times and compact representations. 
\subsubsection*{Acknowledgements.}
We are grateful to Peter Widmayer and J{\"o}rg Stelling for the support and
fruitful discussions. We thank Victor Chepoi for valuable suggestions
regarding the proof of Lemma~\ref{lem:graph1}. PY was supported by FP7
Collaborative EEC project IFNaction. Part of this work was done while PU was
affiliated with Aix-Marseille Université and was visiting ETH Zurich, and while
MM was affiliated with ETH Zurich.

\bibliographystyle{splncs03}
\bibliography{refAlg}
\newpage

\pagenumbering{roman}

\appendix
\section{Pseudocode}
\label{sec:pseudocode}

\begin{algorithm}[h!]
\caption{Digraph decomposition corresponding to Kirchhoff polynomial prime factorization.}
\label{alg:decomp}
\begin{algorithmic}[1]
\Function{SCCFactors}{\variable{G}} %\textsc{SCC-Factorization}
	\State $\variable{Factors} \gets []$
	\ForAll{ $\variable{S} \in \Call{SCCs}{\variable{G}}$} \Comment{strongly connected components}
		\State $\variable{H} \gets \variable{G}[\variable{S}]$ \Comment{induced component}
		\If{ $\exists \variable{v}\variable{u} \in E(\variable{G}): \variable{v} \not\in \variable{S} \wedge \variable{u}\in \variable{S}$ } \Comment{non-initial SCC}
			\State $\variable{H}.\Call{addVertex}{\variable{v}_{\aux}}$
			\ForAll{ $\variable{v}\variable{u} \in E(\variable{G})$ : $\variable{v} \not\in \variable{S} \wedge \variable{u}\in \variable{S}$}
				\State $\variable{H}.\Call{addEdge}{\variable{v}_{\aux}\variable{u}}$
			\EndFor	
		\EndIf
		\State $\variable{Factors}.\Call{append}{\variable{H}}$
	\EndFor
	\State \Return $\variable{Factors}$
	
\EndFunction
\State {}
\Function{DominationFactors}{\variable{G}}
	\If{ $\neg \Call{isRooted}{\variable{G}}$ } \Comment{$G$ has to be rooted}
		\State \Return $[G]$
	\EndIf
	\State $\variable{T} \gets \Call{DominatorsTree}{\variable{G},\Call{getRoot}{\variable{G}}}$ 
	\State $\variable{Factors} \gets []$
	\State $\variable{H} \gets \variable{G}$ \Comment{copy of $G$}
	\For{ $\variable{u} \in \Call{postorder}{\variable{T}}$ }
		\State $\variable{S} \gets \variable{T}.\Call{successors}{\variable{u}}$
		\State $\mathsf{Factors}.\Call{append}{\Call{MakeRooted}{\variable{H}[\variable{S} \cup \{\variable{u}\}], \variable{u} }} $
		\State $\variable{H}.\Call{contract}{\variable{S}, \variable{u}}$ \Comment{contract all of $S$ into $u$}
	\EndFor
	\State \Return $\variable{Factors}$
\EndFunction
\State {}
\Function{AllFactors}{\variable{G}}
	\State $\variable{Factors} \gets []$
	\ForAll{ $\variable{G1} \in \Call{SCCFactors}{\variable{G}}$ }
		\ForAll{ $\variable{G2} \in \Call{DominationFactors}{\variable{G1}}$ }
			\ForAll{ $\variable{G3} \in \Call{SCCFactors}{\variable{G2}}$ }
				\State $\variable{Factors}.\Call{append}{\variable{G3}}$	
			\EndFor
		\EndFor
	\EndFor
	\State \Return $\variable{Factors}$
\EndFunction
\end{algorithmic}
\end{algorithm}
\newpage
%%%%%%%%%%%%%%%%%%%%%%%%%%%%%%%%%%%%%%%%%%%%%%%%%%%%%%%%%%%%%%%%%%%%%%%%%%%%%%%%
%%%   Omitted Proofs
%%%%%%%%%%%%%%%%%%%%%%%%%%%%%%%%%%%%%%%%%%%%%%%%%%%%%%%%%%%%%%%%%%%%%%%%%%%%%%%%

\section{Omitted proofs}
\label{sec:omitted}

\subsection{Proofs from Section~\ref{sec:primality}}

For a directed path $\mathcal{P}$ and vertices $x,y \in \mathcal{P}$
appearing on $\mathcal{P}$ in this order, we define a \emph{slice}
$\mathcal{P}[x:y]$ as the sub-path of $P$ having $x$ as a starting vertex and
$y$ as an ending vertex. Similarly, for a directed cycle $\mathcal{C}$ and $u,v
\in \mathcal{C}$, we define a slice $\mathcal{C}[u:v]$ as the directed path from
$u$ to $v$ using the edges of the cycle $C$.

\restatablegraphone*
\begin{proof}%[Proof of Lemma~\ref{lem:graph1}]
Strong connectivity of $G \setminus \{v\}$ implies the existence of a simple
directed path from $u$ to $w$. This path and the edge $wu$ form a directed cycle
$\mathcal{C'}$ containing edge $wu$.
By Menger's Theorem, there are two directed vertex-disjoint paths
$\mathcal{P}_1,\mathcal{P}_2$ from the root $v$ to vertex $w$. 
The paths are not necessarily disjoint with $\mathcal{C'}$. 
Let us denote, in reverse order on the cycle $\mathcal{C'}$, starting from $w$,
all intersections $c_1,c_2,\ldots, c_k$ of $\mathcal{C'}$ with $\mathcal{P}_1$
or $\mathcal{P}_2$.
Assume, without loss of generality, that $c_k \in \mathcal{P}_1$. We construct
the cycle $\mathcal{C} = \mathcal{P}_1 [c_k : w] \cup \mathcal{C'} [w : c_k]$.
Observe that $\mathcal{P}_2$ is vertex-disjoint with $\mathcal{P}_1$, and with
$\mathcal{C'}[w : c_k]$, since all intersections with $\mathcal{C}$ are between
$c_k$ and $w$. Thus, $\mathcal{P}_2$ is an independent path for $w$ with
respect to $\mathcal{C}$. Similarly, $\mathcal{P}_1[v : c_k]$ is an independent
path for $c_k$ with respect to $\mathcal{C}$.
%\qed
\end{proof}

\restatablegraphtwo*
\begin{proof}
Since $G \setminus \{v\}$ is strongly connected, there exists an edge $wu$ with
$u \in V_1$ and $w \in V_2$.
By Lemma~\ref{lem:graph1}, there is a cycle $\mathcal{C}$ and a vertex $x$ such
that $x$ and $w$ are independent from $\mathcal{C}$. 
Without loss of generality, we can assume that there are no independent vertices
on $\mathcal{C}$ between $w$ and $x$ (as we can always pick $x$ to be closer to
$w$).
If $x \in V_1$, $C$ satisfies the statement of the Lemma and we are done. Assume
therefore that $x \in V_2$ from now on.

Let $\mathcal{P}_1$ and $\mathcal{P}_2$ be the independent paths of $w$ and $x$, respectively. Since
$w$ is not dominating $u$, there exists a $v$-$u$ path $\mathcal{P}$ omitting
$w$.
We can assume that $\mathcal{P}$ intersects $\mathcal{C}$, as otherwise
$\mathcal{P}$ is an independent path for $u$, and the Lemma follows.
Moreover, $\mathcal{P}$ must intersect $\mathcal{C}[x:w]$, as otherwise the
first intersection $z$ of $\mathcal{P}$ with $\mathcal{C}$ is an independent
vertex (with $\mathcal{P}[v:z]$ its independent path) that lies between $w$ and
$x$, a contradiction.

Let $y$ be the last intersection of $\mathcal{P}$ with $\mathcal{C}[x:w]$.
We show that after $y$, $\mathcal{P}$ does not intersect $\mathcal{P}_1$ nor
$\mathcal{P}_2$.
Assume for contradiction that $\mathcal{P}[y:u]$ intersects path
$\mathcal{P}_1$ or $\mathcal{P}_2$, and let $y'$ be the last such intersection.
Let $\bar{u}$ be the first intersection of $\mathcal{P}[y':u]$ with
$\mathcal{C}$. Observe that $\bar{u}$ always exists (it is, at the latest, the
vertex $u$), and that, by our definition of $y$ and $y'$, $\bar{u}\in
\mathcal{C}[w:x]\setminus\{x\}$.
Let $i\in\{1,2\}$ be the index of the path such that $y' \in \mathcal{P}_i$.
Observe that $\mathcal{P}_i[v:y'] \cup \mathcal{P}[y':\bar{u}]$ is an
independent path from $\mathcal{C}$ and thus $\bar{u}$ is an independent vertex
between $w$ and $x$ (note that $\bar{u}\neq w$), a contradiction.

Let $s_0 = u, s_1, s_2, \ldots, s_k$ be all the intersection vertices of
$\mathcal{P}[y:u]$ and $\mathcal{C}[w:x]\setminus\{w,x\}$.
We now iteratively define the sequence $c_0,c_1,c_2,\ldots,c_{2d}$ as follows.
We set $c_0 = u$ and $c_1=s_1$. Then, for every $\ell=1,2,\ldots$, we do: If
$s_k\in \mathcal{C}[w:c_{2\ell-1}]$, then we set $c_{2\ell} = s_k$ and terminate
the creation of the sequence. Else, let $s_i$ be the value of $c_{2\ell-1}$, and
we set $c_{2\ell} = s_j$, where $j$ is the largest $j\geq i$ such that $s_j\in
\mathcal{C}[w:s_i]$. Note that it is possible that $c_{2\ell} = c_{2\ell-1}$. We
also set $c_{2\ell+1} = s_j+1$.
Observe that $c_{2d} = s_k$. Now, one of the following two cases happen:
\begin{description}
\item[$c_{2d} \in V_1$:] Consider a cycle $\mathcal{C}' =
\mathcal{C}[c_{2d}:y] \cup \mathcal{P}[y:c_{2d}]$. Observe that
$\mathcal{C}[w:c_{2d}] \cup \mathcal{P}_1$ is an independent path for $c_{2d}$,
and $\mathcal{P}_2$ is an independent path for $x$ (with respect to
$\mathcal{C}'$), and the Lemma follows. %Thus, a contradiction.
\item[$c_{2i} \in V_2$ and $c_{2i-2} \in V_1$ for some $i$:]
Consider a cycle $\mathcal{C}'' = \mathcal{P}[c_{2i-1} : c_{2i-2}] \cup
\mathcal{C} [c_{2i-2} : c_{2i-1}]$. Observe that $\mathcal{C}[w : c_{2i-2}] \cup
\mathcal{P}_1$ is an independent path for $c_{2i-2}$, and $\mathcal{P}[c_{2i+1}
: c_{2i}] \cup \mathcal{C}[c_{2i+2} : c_{2i+1}] \cup \mathcal{P}[c_{2i+3} :
c_{2i+2}] \cup \ldots \cup \mathcal{C}[c_{2d} : c_{2d-1}] \cup \mathcal{P}[y :
c_{2d}] \cup \mathcal{C}[x : y] \cup \mathcal{P}_2$ is an independent path for
$c_{2i}$ (with respect to $\mathcal{C}''$), and the Lemma follows. 
%Thus, also a contradiction.
\end{description}

%\qed
\end{proof}

\restatablenofacttwo*
\begin{proof}%[Proof of Theorem~\ref{th:nonfact2}]
As in the proof of Theorem~\ref{th:nonfact1}, assume that $\kappa(G) = P \cdot
Q$, and $P$ and $Q$ are nontrivial factors. Let $V_1$ and $V_2$ be the set of
vertices with incoming edges in $\var(P)$ and $\var(Q)$, respectively. Due to
Lemma~\ref{lem:ver_part}, $V_1$ and $V_2$ define a partition of $V \setminus
\{v\}$.

Let us choose $\mathcal{C}$ and two vertices $x,y$ according to the statement of
Lemma~\ref{lem:graph2}, with $\mathcal{P}_x$ being an independent path of a
vertex $x \in V_1$, and $\mathcal{P}_y$ being an independent path of a vertex $y
\in V_2$. Also, let $e_x$ and $e_y$ be the edges from $\mathcal{C}$ ending at
$x$ and $y$, respectively.

Thus, $(\mathcal{C} \setminus \{e_x\}) \cup \mathcal{P}_x$ is a simple path from
$v$ to $x$ and there exists an arborescence $A_x$ containing this path
as a subdigraph. Respectively, there exists an arborescence $A_y$ with path
$(\mathcal{C} \setminus \{e_y\}) \cup \mathcal{P}_y$ as a subdigraph.
Thus, the polynomial $P$ has to contain a monomial corresponding to $A_x[V_1]$
and $A_y[V_1]$, and the polynomial $Q$ has to contain a monomial corresponding
to $A_x[V_2]$ and $A_y[V_2]$.
Thus, there exists a monomial in the product of $P$ and $Q$ which corresponds to
the set of edges $A = A_y[V_1] \cup A_x[V_2]$. However, $\mathcal{C} \subseteq
A$, and that contradicts with the definition of arborescence, since $A$ has
to be acyclic.
%\qed
\end{proof}

\subsection{Proofs from Section~\ref{sec:decomposition}}

We start with the following two observations that relate the unrooted enumerator $\kappa(G)$ and its rooted version $\kappa_v(G)$:

\begin{observation}
$$\kappa(G) = \sum_{v \in V} \kappa_v(G)\qquad\text{and}\qquad \kappa_v(G) = \kappa(\rooot_v(G)).$$
\end{observation}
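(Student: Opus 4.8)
The plan is to prove the two claimed identities separately. The first, $\kappa(G) = \sum_{v \in V} \kappa_v(G)$, I would establish directly from the definitions by partitioning the arborescences of $G$ according to their root. Every arborescence $A \in \arb(G)$ has a unique root vertex (as noted in the preliminaries, the root is the unique vertex able to reach all others along directed paths in $A$), so the sets $\{\arb_v(G)\}_{v \in V}$ form a partition of $\arb(G)$. Summing the monomial $\prod_{e \in E(A)} \lab{e}$ over each block and then over all blocks recovers exactly $\sum_{A \in \arb(G)} \prod_{e\in E(A)} \lab{e} = \kappa(G)$, which is the first identity.

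For the second identity, $\kappa_v(G) = \kappa(\rooot_v(G))$, I would argue that the two polynomials are equal by showing their index sets of arborescences coincide, i.e.\ $\arb_v(G) = \arb(\rooot_v(G))$. Recall $\rooot_v(G)$ is obtained from $G$ by deleting all edges incoming to $v$. The inclusion $\arb(\rooot_v(G)) \subseteq \arb_v(G)$ holds because any arborescence of $\rooot_v(G)$ is a spanning arborescence of $G$ using no edge incoming to $v$, and since $v$ therefore has no incoming edge it must be the root; moreover each such object is already an arborescence of $G$ as $\rooot_v(G)$ is a subdigraph of $G$. Conversely, for $\arb_v(G) \subseteq \arb(\rooot_v(G))$, any arborescence $A$ of $G$ rooted at $v$ uses no edge entering $v$ (the root has in-degree zero in an arborescence), so every edge of $A$ survives in $\rooot_v(G)$, making $A$ an arborescence of $\rooot_v(G)$. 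Since the two arborescence sets are equal and the monomial associated to a given edge set is the same in either digraph, the enumerators agree.

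The argument is essentially a careful bookkeeping of definitions, so I do not anticipate a genuine obstacle. The one point that requires a moment of care is the direction $\arb_v(G) \subseteq \arb(\rooot_v(G))$: one must invoke the fact that in \emph{any} arborescence the root has in-degree zero, which is exactly what guarantees that deleting $\incominglabelnonexistent$ the edges entering $v$ discards nothing used by $A$. Apart from verifying this small structural fact about roots, both halves follow immediately from the partition of $\arb(G)$ by root and from the defining property of $\rooot_v(G)$.
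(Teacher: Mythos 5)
Your proof is correct and is exactly the intended justification: the paper states this as an Observation without proof, and your two-step argument --- partitioning $\arb(G)$ by the unique root of each arborescence, and verifying $\arb_v(G)=\arb(\rooot_v(G))$ using the fact that the root of an arborescence has in-degree zero while $\rooot_v(G)$ only deletes edges entering $v$ --- is precisely the bookkeeping the paper leaves implicit. (One editorial note: the stray undefined token in your final paragraph should be deleted, but it does not affect the mathematics.)
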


The next observation comes from the fact that in order to construct an arborescence in a digraph, it is necessary and sufficient to pick exactly one incoming edge for each vertex except the root.

\begin{observation}
\label{obs:acyclic}
For an arbitrary $G$,  $v \in V(G)$, and $\{v_1,v_2,\ldots,v_{n-1}\} = V(G) \setminus \{v\}$:
\begin{equation}
\label{eq:kappa2}
\kappa_v(G) =\!\!\!\!\sum_{\substack{e_1 \in \out(v_1),\\ \ldots,\\e_{n-1} \in \out(v_{n-1})}} [\{e_1,\ldots,e_{n-1}\} \text{ is acyclic}] \cdot \lab{e_1} \cdot \ldots \cdot \lab{e_{n-1}},
\end{equation}
\end{observation}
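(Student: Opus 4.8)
The plan is to prove equation~\eqref{eq:kappa2} by identifying the arborescences of $G$ rooted at $v$ with the \emph{acyclic} edge selections appearing on the right-hand side, namely the sets $\{e_1,\ldots,e_{n-1}\}$ with $e_i\in\out(v_i)$ for each non-root vertex $v_i$ that happen to be acyclic. Once this identification is in place the identity reads off immediately: every such selection contributes the monomial $\lab{e_1}\cdots\lab{e_{n-1}}$, the Iverson bracket retains exactly the acyclic ones, and these are precisely the edge sets of arborescences rooted at $v$, so the sum reproduces $\kappa_v(G)=\sum_{A\in\arb_v(G)}\prod_{e\in E(A)}\lab{e}$.

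To set things up I would first observe that a selection $S=\{e_1,\ldots,e_{n-1}\}$ with $e_i\in\out(v_i)$ defines a spanning subdigraph $H$ of $G$ on all $n$ vertices with exactly $n-1$ edges, in which $v$ has in-degree $0$ and every other vertex has in-degree exactly $1$. The forward containment is then immediate from the definitions: any arborescence rooted at $v$ is acyclic and assigns each non-root vertex a unique incoming edge, hence is precisely such an acyclic selection, and distinct arborescences yield distinct selections since the edge set of an arborescence \emph{is} its selection.

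The substantive step is the converse: an acyclic selection $S$ is automatically an arborescence rooted at $v$, so that reachability and connectivity come for free and need not be imposed. I would argue this by a backward walk. Starting from any vertex $w\neq v$, follow its unique incoming edge in $S$ to the tail, then repeat. Because every non-root vertex has exactly one incoming edge the walk is deterministic, and because $S$ is acyclic it can never revisit a vertex; hence it must terminate, and it can only terminate at the unique vertex of in-degree $0$, namely $v$. Reversing the walk gives a directed path from $v$ to $w$, and determinism of the backward walk makes that path unique. Thus every vertex is reachable from $v$ by a unique directed path in $H$, which is exactly the definition of an arborescence rooted at $v$. This is the part I expect to be the main obstacle, since it is where acyclicity, the only condition enforced by the bracket, must be leveraged to recover the global tree structure.

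Combining the two containments shows that the acyclic selections are exactly the edge sets of the arborescences rooted at $v$. This identification is label-preserving by construction, so summing over all selections weighted by the acyclicity bracket yields precisely $\kappa_v(G)$, establishing~\eqref{eq:kappa2}.
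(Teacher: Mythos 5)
Your proposal is correct and follows the same idea the paper uses: the paper justifies this observation with the single remark that constructing an arborescence amounts to picking exactly one incoming edge per non-root vertex, with acyclicity as the necessary and sufficient condition. Your backward-walk argument simply supplies the details of that equivalence (in particular, that acyclicity plus the in-degree constraints forces reachability from $v$ and uniqueness of paths), which the paper leaves implicit.
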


where $[.]$ denotes the characteristic function.

% connecting sentence

\restatablesccdecomposition*
\begin{proof}%[of Theorem~\ref{th:sccDecomposition}]
W.l.o.g. we can assume that $G[V_1]$ is the only initial component of $G$. 
Thus $\kappa(G) = \sum_{v \in V_1} \kappa_v(G)$ and it is enough to show that for any $v \in V_1$:
\begin{equation}
\label{eq:kappa1}
\kappa_v(G) = \kappa_v(\comp{G}{V_1}) \cdot \kappa(\comp{G}{V_2}) \cdot\ldots\cdot \kappa(\comp{G}{V_k}).
\end{equation}

Let us set $V_i = \{v^i_1,v^i_2,\ldots,v^i_{n_i}\}$ for $i = 2, \ldots, k$ and $V_1 = \{v^1_1, v^1_2, \ldots, v^1_{n_1-1}\} \cup \{v\}$.
Additionally, we denote by $u_i$ the auxiliary vertex of $\comp{G}{V_i}$ (for $i=2,\ldots, k$). Observe that for $i=2,\ldots, k$ each $\comp{G}{V_i}$ is rooted at $u_i$. Thus we can rewrite the right side of \eqref{eq:kappa1} as\footnote{For any $x \in V_i$, $\comp{G}{V_i}$ preserves the labels of its incoming edges from the rest of $G$, thus we can safely write $\out(x)$ without explicitly specifying the digraph.}:
\begin{align} 
\kappa&_v(\comp{G}{V_1}) \cdot  \kappa(\comp{G}{V_2}) \cdot \ldots \cdot \kappa(\comp{G}{V_k}) = \kappa_v(\comp{G}{V_1}) \cdot \kappa_{u_2}(\comp{G}{V_2})\cdot \ldots \cdot \kappa_{u_k}(\comp{G}{V_k}) \stackrel{\mbox{\small{Obs. \ref{obs:acyclic}}}}{=} \nonumber\\
\stackrel{\mbox{Obs. \small{\ref{obs:acyclic}}}}{=}&
\left(\sum_{\substack{e^1_1 \in \out(v^1_1),\\\ldots,\\e^1_{n_1-1} \in \out(v^1_{n_1-1})}} 
[\{e^1_1,\ldots,e^1_{n_1-1}\} \text{ is acyclic}] \cdot \lab{e^1_1}\cdot\ldots \cdot \lab{e^1_{n_1-1}}
\right) \cdot \nonumber\\
&\cdot \left(\sum_{\substack{e^2_1 \in \out(v^2_1),\\\ldots,\\e^2_{n_2} \in \out(v^2_{n_2})}} 
[\{e^2_1,\ldots,e^2_{n_2}\} \text{ is acyclic}] \cdot \lab{e^2_1}\cdot\ldots\cdot \lab{e^2_{n_2}}
\right) \cdot \ldots \nonumber\\
\ldots &\cdot \left(\sum_{\substack{e^k_1 \in \out(v^k_1),\\\ldots,\\e^k_{n_k} \in \out(v^k_{n_k})}} 
[\{e^k_1,\ldots,e^k_{n_k}\} \text{ is acyclic}] \cdot \lab{e^k_1}\cdot\ldots \cdot \lab{e^k_{n_k}}
\right) = \nonumber
\end{align}
\begin{align}
\label{eq:kappa3}
=\sum_{\substack{e^1_1 \in \out(v^1_1),\\\ldots,\\e^1_{n_1-1} \in \out(v^1_{n_1-1})}}  
\!\!\!\ldots\!\!\!\! \sum_{\substack{e^k_1 \in \out(v^k_1),\\\ldots,\\e^k_{n_k} \in \out(v^k_{n_k})}} \!\!\!\!&\Big( [\{e^1_1,\ldots,e^1_{n_1-1}\} \text{ is acyclic}] \cdot \ldots \cdot [\{e^k_1,\ldots,e^k_{n_k}\} \text{ is acyclic}] \cdot \nonumber\\
&\cdot \left( \lab{e^1_1} \cdot \ldots \cdot \lab{e^1_{n_1-1}} \right)\cdot \ldots \cdot \left( \lab{e^k_1} \cdot  \ldots \cdot \lab{e^k_{i_k}} \right) \Big).
\end{align}

Since $\{v_1,v_2,\ldots,v_{n-1}\} = V(G) \setminus \{v\} = (V_1 \setminus\{v\}) \cup V_2 \cup \ldots \cup V_k = \{v^1_1,\ldots,v^1_{n_1-1},\ldots,v^k_1,\ldots,v^k_{n_k}\}$, both \eqref{eq:kappa2} and \eqref{eq:kappa3} 
are summing over the same ranges (in a permuted order). Thus it is enough to prove that
$$\{e_1,\ldots,e_{n-1}\} \text{ is acyclic in }G$$
$$ \text{ iff }$$
$$(\{e^1_1,\ldots,e^1_{n_1-1}\} \text{ is acyclic in }G[V_1]) \text{ and } \ldots \text{ and }  (\{e^k_1,\ldots,e^k_{n_k}\} \text{ is acyclic in }G[V_k]),$$
where $\{e^1_1,\ldots,e^1_{n_1-1}\},\ldots,\{e^k_1,\ldots,e^k_{n_k}\}$ is the partitioning of $\{e_1,e_2,\ldots,e_{n-1}\}$ by the strongly connected component to which the target vertex belongs.
However, it is enough to notice that any cycle in $G$ can only span vertices from a single strongly connected component.
%\qed
\end{proof}

\restatabledominationdecomposition*
\begin{proof}%[of Theorem~\ref{th:domination_decomposition}]
Observe that \eqref{eq:kappa4} is equivalent to:
$$\kappa_v(G) = \kappa_u(G[D]) \cdot \kappa_v( G(D \to u) ).$$

Let us denote $D \setminus \{u\} = \{v_1,v_2,\ldots,v_{i-1}\}$, and let $V \setminus D \setminus \{v\} = \{v_{i+1},\ldots,v_{n-1}\}$. Thus, $V = \{v_1,\ldots,v_{i-1},u,v_{i+1},\ldots,v_{n-1},v\}$.

Observe, that in $G$ there are no edges going from any vertex from $V \setminus D$ to any vertex from $D \setminus \{ u \}$, as otherwise $u$ would not dominate said vertices.
If we denote for short $\out_1(u) = \out_G(u)$ (all incoming edges to $u$) and $\out_2(u) = \out_{G(D \to u)}(u)$ (edges incoming from $V\setminus D$), we have that $\out_2(u) \subseteq \out_1(u)$. Thus:
$$ \kappa_v(G(D \to u)) = 
\sum_{e_i \in \out_2(u)}\sum_{\substack{\\e_{i+1} \in \out(v_{i+1}),\\\ldots,\\e_{n-1} \in \out(v_{n-1})}} [\{e_i,\ldots,e_{n-1}\} \text{ is acyclic}] \cdot \lab{e_i}\ldots \lab{e_{n-1}}.
$$

Observe that any arborescence of $G$ rooted at $v$ cannot use an edge from $\out_1(u) \setminus \out_2(u)$, as that would create disconnected digraph  (as any path going from $v$ to $D$ has to go through $u$). Thus:
$$ \kappa_v(G) = 
\sum_{\substack{e_1 \in \out(v_1),\\\ldots,\\e_{i-1} \in \out(v_{i-1})}}
\sum_{e_i \in \out_2(u)}
\sum_{\substack{e_{i+1} \in \out(v_{i+1})\\\ldots,\\e_{n-1} \in \out(v_{n-1})}} [\{e_1,\ldots,e_{n-1}\} \text{ is acyclic}] \cdot \lab{e_1}\ldots \lab{e_{n-1}}.
$$

Additionally, in $G[D]$, all vertices from  $D \setminus \{u\}$ have the same incoming edges as in $G$ (as taking induced subdigraph removes edges coming from outside, but there were no such edges).
 Thus by Observation~\ref{obs:acyclic}:
$$ \kappa_u(G[D]) = 
\sum_{\substack{e_1 \in \out(v_1),\\\ldots\\e_{i-1} \in \out(v_{i-1})}} [\{e_1,\ldots,e_{i-1}\} \text{ is acyclic}] \cdot \lab{e_1}\cdot \ldots \cdot \lab{e_{i-1}}.
$$

Thus, it is enough to prove that, for $e_i \in \out_2(u), e_1 \in \out(v_1),\ldots,e_{i-1} \in \out(v_{i-1}), e_{i+1} \in \out(v_{i+1}), \ldots, e_{n-1} \in \out(v_{n-1})$:
$$\{e_1,\ldots,e_{n-1}\} \text{ is acyclic in }G$$
$$\text{iff}$$
$$(\{e_1,\ldots,e_{i-1}\} \text{ is acyclic in }\rooot_u(G[D]))\text{ and } (\{e_i,\ldots,e_{n-1}\} \text{ is acyclic in }G(D \to u))$$

To prove it in one direction, observe that any cycle in $G[D]$ remains cycle in $G$. Similarly, any cycle in $G(D \to u)$ remains a cycle in $G$. To prove it in another direction, observe that any cycle in $G$ (with constrains on $e_i \in \out_2(u)$) either spans vertices only from $D \setminus \{u\}$ thus remains a cycle in $G[D]$, or spans at least one vertex from $G \setminus (D \setminus \{u\})$. In the latter case, the digraph remains cyclic when contracting whole $D$ into $u$.
%\qed
\end{proof}

\restatableththreelevels*
\begin{proof}%[of Theorem~\ref{th:3levels}]
Let $H$ be a digraph obtained through the application of rule \eqref{eq:tree_decomp}, that is
$$H = \rooot_u (G( \dom{G}{u_1} \to u_1; \ldots ;\dom{G}{u_i}\to u_i)[\{u,u_1,\ldots,u_i\}]).$$

First, observe that in digraph $F = \rooot_u( G[ \dom{G}{u} ] )$ the domination relation between vertices (with respect to vertex $u$) is the same as is in $G$ (with respect to vertex $v$). Thus, as we have that
$$H = F( \dom{G}{u_1} \to u_1; \ldots ;\dom{G}{u_i}\to u_i),$$
there are no non-trivial dominance relations in $H$.

What is now left to prove is that for any strongly connected component $V_j \not= \{u\}$ of $H$, the digraph $\comp{H}{V_j}$ has no non-trivial dominators (with respect to $v_\aux$, its root). However, for any vertex $w \in V_j$ and for any path going from $u$ to $w$ in $H$, there is a path from $v_\aux$ to $w$ in $\comp{H}{V_j}$ corresponding to the suffix of the former one. Thus, if there is any vertex $w'$ dominating (non-trivially) $w$ in $\comp{H}{V_j}$, then $w'$ dominates (non-trivially) $w$ in $H$, a contradiction.
%\qed
\end{proof}

\newpage
\section{Experimentally evaluated digraphs}
\label{app:expgraphs}
\enlargethispage{5cm}
Description and references of the experimentally evaluated digraphs in Section~\ref{sec:applications} are shown in Table~\ref{tab:netref}.
Note that in some application domains, including the Laplacian framework, rooted directed spanning trees for which all edges point towards the root, called in-arborescences, have to be enumerated. This is not limiting since there is a bijection between the in-arborescences of a digraph $G$ and the (out-) arborescences of the transpose digraph of $G$. 

\begin{table}[h!]
  \begin{center}
  \caption{Digraphs to which the recursive enumeration algorithm was applied. Shown are the digraph aliases, which type of arborescences were enumerated, and a short description for each digraph. }
\label{tab:netref}
    \begin{tabular}{l l p{.65\linewidth}}
\hline\noalign{\smallskip}
	Alias& Enumerated& Description\\
\noalign{\smallskip}
\hline
\noalign{\smallskip}
ColE1&		in-arborescences&		Kinetic scheme of the ColE1 replication control mechanism from	\cite{shin2000effects}.\\
\noalign{\smallskip}
Brain&		out-arborescences&		Brain effective connectivity model from \cite{xu2014pooling}.\\
\noalign{\smallskip}
MDH&		in-arborescences&		Proposed kinetic mechanism for the reaction cycle of \textit{M.methylotrophus} methanol dehydrogenase (MDH) with ammonium as activator from \cite{hothi2005kinetic}.\\
\noalign{\smallskip}
KNF33&		in-arborescences&		Three by three classical KNF (Koshland, Nemethy, and Filmer) model for allosteric regulation of enzymes.\\
\noalign{\smallskip}
PHO5&		in-arborescences&		Regulation of yeast PHO5 gene from \cite{ahsendorf2014framework}.\\
\noalign{\smallskip}
GR&			in-arborescences&		Scheme for the catalytic mechanism of glutathione reductase (GR) from \cite{pannala2013biophysically}.\\
\noalign{\smallskip}
ERG&		in-arborescences&		Randomly generated Erd\H{o}s–R\'enyi digraph with probability for edge creation $p=0.2$.\\
\noalign{\smallskip}
SF1&		out-arborescences&		The largest SCC of a randomly generated scale-free digraph with $n=1000$, probability for connecting a new node to an existing one chosen randomly according to in-degree distribution $\alpha=0.4$, probability for adding an edge between two existing nodes $\beta=0.4$, and probability for connecting a new node to an existing one chosen randomly according to out-degree distribution $\gamma=0.2$.\\
\noalign{\smallskip}
SF3&		out-arborescences&		As above, with $n=3000$.\\
\noalign{\smallskip}
S838&		in-arborescences&		An electronic benchmark circuit from \cite{s838}.\\
\noalign{\smallskip}
PE45&		in-arborescences&		Randomly generated practically enumerable digraph from prime factors with topology $\lbrace(1,3),(1,2),(2,3),(3,2)\rbrace$, with recursion depth $d = 4$, and number of prime factors per recursion level $n = 5$.\\
    \hline
    \end{tabular}
  \end{center}
\end{table}

\end{document}